\newif\iflong
\title{Line Cover and Related Problems\thanks{This paper was presented at STACS 2026.}}
\author[1,2]{Matthias Bentert \thanks{Supported by the European Research Council (ERC) under the European Union’s Horizon 2020 research and innovation program (grant agreement No. 819416) and the German Research Foundation (DFG, grant 1-5003036: SPP 2378 -- ReNO-2).}}
\author[2]{Fedor V.  Fomin \thanks{ Supported by the Research Council of Norway under BWCA project, reference 314528,  and European Research Council (ERC) via grant NewPC, reference 101199930.}}
\author[2]{Petr A. Golovach \thanks{Supported by the Research Council of Norway under BWCA  (grant no.~314528) and Extreme-Algorithms (grant no~355137) projects.}}
\author[3]{Souvik Saha}
\author[3]{Sanjay Seetharaman}
\author[2]{Kirill Simonov}
\author[3]{Anannya Upasana}
\affil[1]{TU Berlin, Germany}
\affil[2]{ University of Bergen, Norway}
\affil[3]{The Institute of Mathematical Sciences, HBNI, India}
\DeclareMathOperator{\poly}{poly}
\newcommand{\Oh}{\mathcal{O}}
\newcommand{\bfp}{\mathbf{p}}
\newcommand{\bfx}{\mathbf{x}}
\newcommand{\bfy}{\mathbf{y}}
\newcommand{\norm}[1]{\left\lVert#1\right\rVert}
\newcommand{\bbR}{\mathbb{R}}
\DeclareMathOperator{\dist}{dist}
\newcommand{\pname}{\textsc}
\newcommand{\projcl}{\pname{Projective Clustering}\xspace}
\newcommand{\problemPLC}{\pname{Line Cover}\xspace}
\newcommand{\problemHLC}{\pname{Hyperplane Cover}\xspace}
\newcommand{\problemLineCl}{\pname{Line Clustering}\xspace}
\newcommand{\many}{\ensuremath{n^{90}}}
\newcommand{\dlarge}{\ensuremath{d_{\ell}}}
\newcommand{\dsmall}{\ensuremath{d_{s}}}
\tikzstyle{point}=[circle,draw,inner sep=2pt]
\newcommand{\Rd}{\ensuremath{\mathds{R}^d}}
\newcommand{\defparprob}[4]{
\begin{tcolorbox}[colback=gray!5!white,colframe=gray!75!black]
  \vspace{-1mm}
  \begin{tabular*}{\textwidth}{@{\extracolsep{\fill}}lr} #1\\ 
  \end{tabular*}
  {\bf{Input:}} #2  \\
  {\bf{Question:}} #4
  \vspace{-1mm}
\end{tcolorbox}
}
\DeclareMathOperator{\operatorClassNP}{NP}
\newcommand{\classNP}{\ensuremath{\operatorClassNP}\xspace}
\DeclareMathOperator{\operatorClassNPH}{NP-Hard}
\newcommand{\classNPH}{\ensuremath{\operatorClassNPH}\xspace}
\DeclareMathOperator{\operatorClassFPT}{FPT\xspace}
\newcommand{\classFPT}{\ensuremath{\operatorClassFPT}\xspace}
\DeclareMathOperator{\operatorClassW}{W}
\newcommand{\classW}[1]{\ensuremath{\operatorClassW[#1]}}
\DeclareMathOperator{\operatorClassParaNP}{Para-NP\xspace}
\newcommand{\classParaNP}{\ensuremath{\operatorClassParaNP}\xspace}
\DeclareMathOperator{\operatorClassXP}{XP\xspace}
\newcommand{\classXP}{\ensuremath{\operatorClassXP}\xspace}
\DeclareMathOperator{\operatorClassETH}{ETH\xspace}
\newcommand{\classETH}{\ensuremath{\operatorClassETH}\xspace}
\newcommand{\calP}{\mathcal{P}}
\newcommand{\sgn}{{\sf{sign}}}
\newcommand{\xp}{{\sf{XP}}\xspace}
\newcommand{\pca}{{\sf{PCA}}\xspace}
\newcommand{\calC}{\mathcal{C}}
\newcommand{\comp}{{\sf{comp}}}
\newtheorem{lemma}{Lemma}
\newtheorem{theorem*}[lemma]{Theorem*}
\newtheorem{claim}{Claim}
\newtheorem{proposition}[lemma]{Proposition}
\newtheorem{reduction rule}{Reduction Rule}
\date{}
\begin{document}
	
	\maketitle
	
	\begin{abstract}
 \begin{abstract}
We study several extensions of the classic \problemPLC{} problem of covering a set of \(n\) points 
in the plane with   \(k\) lines.
\problemPLC{} is known to be \classNP-hard and our focus is on two natural generalizations: 
 (1)~\problemLineCl, where the objective is to find 
$k$ lines in the plane that minimize the sum of squares of distances of a given set of input points to the closest line, and 
(2)~\problemHLC, where the goal is to cover $n$ points in $\mathds{R}^d$ by $k$ hyperplanes. 
We also consider the more general \projcl{} problem, which unifies both of these 
and has numerous applications in machine learning, data mining, and computational geometry.
In this problem one seeks $k$ affine subspaces of dimension~$r$ minimizing the sum of squares of distances of a given set of $n$ points in \( \mathds{R}^d \) to the closest point within one of the~$k$ affine subspaces.

Our main contributions highlight notable differences in the parameterized complexity of these problems. While \problemPLC is fixed-parameter tractable with respect to the number~$k$ of lines in the solution, we show that \problemLineCl{} is \classW1-hard parameterized by~$k$, and that, under the Exponential Time Hypothesis, it admits no algorithm running in time \(n^{o(k)}\). Moreover, although \problemHLC{} has been known to be \classNP-hard since the 1980s, following work of Megiddo and Tamir, even for~$d=2$, we show that it remains NP-hard even when~$k=2$.

We complement our hardness results by presenting an algorithm for \projcl. We show that this problem is solvable in time \(n^{\mathcal{O}(dk(r+1))}\). This not only yields an upper bound for \problemLineCl{} that asymptotically matches our lower bound, but also significantly extends the benchmark exact algorithms for $k$-means clustering due to Inaba, Katoh, and Imai.

%
%
 \end{abstract} 
	\end{abstract}
	
\section{Introduction}

The \problemPLC   problem asks for the smallest number $k$ of lines that cover a given set of $n$ points in the Euclidean plane $\mathbb{R}^2$. This problem is  NP-hard and is  well-studied in the context of parameterized complexity, particularly because it admits a simple kernelization, often highlighted in introductory lectures on kernelization
\cite{cygan2015parameterized,fomin2019kernelization,Niedermeierbook06}. 

In this paper, we study two natural and well-established generalizations of \problemPLC, namely \problemLineCl and \problemHLC, along with their overarching generalization \projcl. Our results reveal an interesting contrast between the parameterized complexities of these problems and that of \problemPLC.

\problemLineCl is an optimization variant of \problemPLC, in which the goal is to find $k$ lines that best fit the points in the $\ell_2$ norm, rather than merely covering them.  In other words, we search for $k$ lines such that the sum of the squares of distances from each point to the closest point on one of the lines is minimized.
 
\defparprob{\problemLineCl}{
A set
$S=\{\bfx_1,\bfx_2, \dots, \bfx_n\}$ of points in \( \mathds{R}^2 \) and integers $k$ and $B$.}{$k$}{Are there $k$ lines  $L_1,L_2, \dots, L_k$ such that 
${\sum\limits_{i=1}^{n} \min\limits_{1\leq j\leq k}\dist(\bfx_i, L_j)^2\leq B}$?}
 \Cref{fig:xpalgoimg} provides an example of \problemLineCl with $k=3$.  
In Operations Research, 
 this  problem is known as \textsc{Line Facility Location}
 ~\cite{Megiddo1982}.
The problem has been studied for convex and piecewise linear functions~\cite{BOBROWSKI2012108} as well as in the weighted setting~\cite{DBLP:journals/jco/CheungD11}.

\begin{figure}[h]
    \centering
    \includegraphics[width=0.9\linewidth]{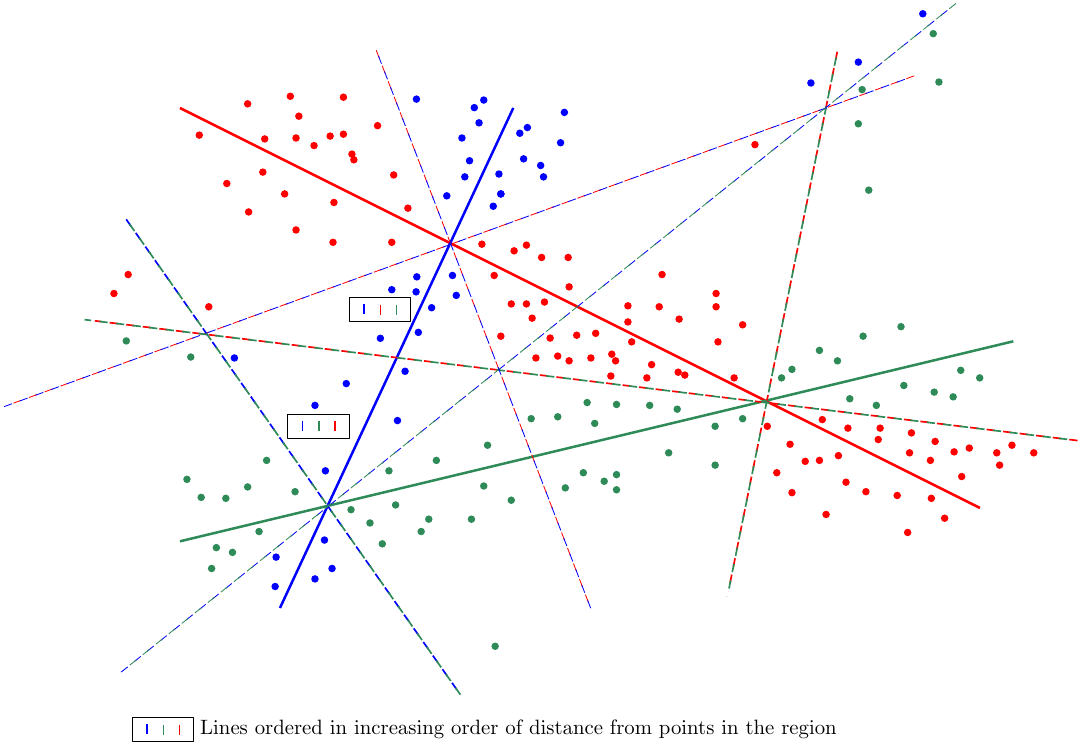}
    \caption{An instance of \problemLineCl{} with~$k=3$ and a solution (the three colored lines). The points are colored based on the color of the nearest line in the selected solution. The dashed lines show the region where points have equal distance from two solution lines.}
    \label{fig:xpalgoimg}
\end{figure}

The second natural generalization of  \problemPLC  that we consider in this paper is the extension to $d$-dimensional spaces.
In the  \problemHLC problem,  the task is to cover~$n$~points in $\mathds{R}^d$ by $k$ hyperplanes (affine subspaces of dimension $d-1$).
 
 \defparprob{\problemHLC}{
 A set
 $S=\{\bfx_1,\bfx_2, \dots, \bfx_n\}$ of points in \( \mathds{R}^d \) and an integer $k$.}{$k$}{Is it possible to cover all points of $S$ by $k$ hyperplanes? 
} 

Langerman and Morin~\cite{langerman2005covering} initiated the study of the parameterized complexity of \problemHLC\ (as part of a more general algorithmic framework). They showed that the problem is FPT when parameterized by $k + d$, providing several algorithms running in~$k^{\Oh(dk)} n^{\Oh(1)}$ time.

Finally, we consider the problem \projcl which extends both \problemLineCl and  \problemHLC, as well as many other well-studied problems. 
In this problem, we are given a family~$S$ of~$n$~points in~$\Rd$ and integers $r$ and $k$; note that we allow points to have the same coordinates.
The task is to find a set~$H$ of~$k$ affine subspaces of dimension~$r$ that minimizes the sum of squares of distances of each point in~$S$ to the closest point in the union of all subspaces in~$H$. 
It is also sometimes called \textsc{Affine Subspace Clustering} as the objective is to approximate points using a set of low-dimensional affine subspaces. The difference between linear and affine subspaces is that a linear subspace should always contain the origin of the coordinates. Affine subspaces can always be embedded into linear subspaces of one extra dimension. 
%

\defparprob{\projcl{}}{A multiset $S=\{\bfx_1,\bfx_2, \dots, \bfx_n\}$ of points in \( \mathds{R}^d \) and integers $r$, $k$, and~$B$.}{$k$}{Are there~$k$ $r$-dimensional affine subspaces $A_1,A_2, \dots, A_k$ whose union $H$ satisfies~$\sum\limits_{i=1}^{n}\dist(\bfx_i, H)^2 \leq B\label{eq:minsq1}$?}
Here, $\dist(\bfx_i, H)$ is the Euclidean distance of the point $\bfx_i$ to the closest point in one of the affine subspaces in~$H$. 
In particular, \problemLineCl is the special case with $d=2$ and $r=1$ and \problemHLC{} is the variant where~$r=d-1$ and~$B=0$.

\projcl{} naturally finds many applications in fields such as unsupervised learning~\cite{DBLP:reference/ml/Procopiuc10}, data mining~\cite{AggarwalPWYP99, parsons2004subspace}, artificial intelligence~\cite{10460289}, computational biology~\cite{DBLP:reference/ml/Procopiuc10}, database management~\cite{DBLP:conf/vldb/ChakrabartiM00}, computer vision~\cite{DBLP:conf/sigmod/ProcopiucJAM02}, image segmentation~\cite{yang2008unsupervised}, motion segmentation~\cite{vidal2008multiframe}, face clustering~\cite{ho2003clustering}, image processing~\cite{hong2006multi}, systems theory~\cite{vidal2003algebraic}, and others~\cite{elhamifar2013sparse, vidal2011subspace}.
It  also generalizes a number of fundamental and well-studied problems such as 
  \textsc{$k$-Means Clustering}~\cite{Lloyd82},  \textsc{Principal Component Analysis} (\pca)~\cite{Eckart1936TheAO}, and 
\textsc{Hyperplane Approximation}~\cite{korneenko1993hyperplane}.

\subparagraph{Our contribution.}
First, we show that \problemLineCl{} is \classW1-hard when parameterized by the number $ k$ of lines in the solution. Our reduction also implies a lower bound based on the Exponential Time Hypothesis (ETH).

\begin{restatable}{thm}{hardness}
    \label{thm:hardness}
   \problemLineCl is \classW1-hard parameterized by~$k$.  Assuming the \classETH, it cannot be solved in~$n^{o(k)}$ time.
\end{restatable}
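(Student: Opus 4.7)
The plan is to establish both conclusions by a single parameterized reduction from \textsc{Multicolored $k$-Clique} (equivalently, from $k$-\textsc{Clique}), which is \classW{1}-hard and, under \classETH, does not admit an $n^{o(k)}$-time algorithm. Since the constructed instance of \problemLineCl will use exactly $k$ lines (up to a small additive constant), both the \classW{1}-hardness and the $n^{o(k)}$ lower bound transfer verbatim.

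Given a graph $G=(V_1\cup\cdots\cup V_k,E)$, the construction proceeds in three layers. First, I would fix $k$ pairwise distinct slopes $m_1,\dots,m_k$, one per color class, and place a heavy \emph{anchor cluster} of many points on a reference line $\ell_i$ of slope $m_i$. The anchors will dominate the budget, forcing any budget-respecting solution to consist of $k$ lines, each near-parallel to a distinct $\ell_i$, thus inducing a bijection from solution lines to color classes. Second, for each vertex $v\in V_i$ I would encode the choice of $v$ by an intercept $c_v$: I place a small \emph{vertex gadget} that is minimized precisely when the slot-$i$ line has slope $m_i$ and intercept $c_v$, with a sharply quadratic penalty for other intercepts. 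Third, for every non-edge $\{u,w\}$ with $u\in V_i$ and $w\in V_j$, I add an \emph{edge gadget}: a small point set whose coverage cost is at a fixed baseline unless the slot-$i$ line has intercept $c_u$ \emph{and} the slot-$j$ line has intercept $c_w$, in which case the cost strictly exceeds the baseline. Setting $B$ to the sum of unavoidable baselines, a solution meeting $B$ exists iff the chosen intercepts pick a multicolored clique.

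The main obstacle is that \problemLineCl minimizes the sum of squared distances over \emph{all} points, so the anchor, vertex, and edge contributions are not a priori independent: the solver could tilt or shift a line to reduce an edge penalty at the expense of a small anchor or vertex penalty. I would therefore tune the anchor multiplicities, the slope separations $|m_i-m_j|$, and the intercept encodings $c_v$ so that (i) the anchors pin down the $k$ slopes up to vanishing rotation; (ii) the vertex gadget forces the intercept of each slot line to lie in a discrete set $\{c_v : v\in V_i\}$; and (iii) no coordinated perturbation of several solution lines can simultaneously avoid an edge penalty and stay within the budget. A routine but careful quantitative analysis with rational coordinates of polynomial bit length should certify these separations, and since the number of lines equals $k$ and the instance has size polynomial in $n$, both the \classW{1}-hardness and the $n^{o(k)}$ \classETH lower bound follow.
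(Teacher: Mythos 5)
Your overall architecture (anchors to pin $k$ slopes, intercepts to encode vertex choices, gadgets to enforce pairwise constraints, exact budget $B$) is in the same spirit as the paper's reduction, but the central gadget you rely on has a genuine flaw. You ask for a ``small point set whose coverage cost is at a fixed baseline unless the slot-$i$ line has intercept $c_u$ \emph{and} the slot-$j$ line has intercept $c_w$, in which case the cost strictly exceeds the baseline.'' In \problemLineCl{} every point contributes $\min_j \dist(\bfx,L_j)^2$, so moving a line onto a specific intercept can only \emph{decrease} the cost of points near that position; it can never raise it. To make the cost strictly larger at the configuration $(c_u,c_w)$ than at every alternative, your gadget points would have to be close to the slot-$i$ line for \emph{every} intercept $c_{u'}\neq c_u$ (or symmetrically for slot $j$). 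But the distance from a fixed point to the slot-$i$ line (fixed slope, variable intercept) is a V-shaped function of the intercept, so a point can be equidistant from at most two intercept values; with $|V_i|\geq 3$ no placement gives a uniform ``baseline'' over all non-$(c_u,c_w)$ choices, and spreading the gadget over several points makes the baseline depend on \emph{which} alternative vertex is chosen, which destroys the exact budget you need. This conjunctive-penalty gadget is not a routine tuning issue; it is the step where the construction breaks, and the remaining difficulties you defer (how much a slightly tilted or shifted line can save on gadget points versus what it loses on anchors, and how to keep contributions uniform across different admissible choices) are exactly where the real work lies.

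The paper resolves these issues with a different architecture: it reduces from \textsc{Regular Multicolored Independent Set} (regularity is what makes the per-vertex counts uniform), represents each color by \emph{two} families of lines — $\ell$ nearly horizontal and $\ell$ nearly vertical, plus four forced ``fixed'' lines, so $k=2\ell+4$ — and encodes adjacency purely through a coverage/reward structure: point groups are placed at grid intersections (offset by exactly $1$ between the $s$- and $v$-lines) so that compatible choices cover them for free while incompatible choices pay, with explicit compensation multiplicities ($W-\phi(h)$ points, the $\vartheta,\varphi,\varphi'$ bookkeeping) to equalize budgets across choices. Two further ingredients you would also need and do not supply appear there: a far-away two-grid structure at scales \dsmall{} and \dlarge{} together with a pigeonhole argument forcing every solution line to be nearly axis-parallel and bounding the gain from tilting by less than the gap, and a final perturbation converting the multiset instance into a set of distinct rational points while preserving a cost gap of more than $1$. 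Reworking your clique-based, one-line-per-color scheme would require replacing the penalty gadget by some reward-style encoding and redoing all of this quantitative balancing, so as written the proposal does not yield the theorem.
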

We consider \Cref{thm:hardness} to be particularly interesting for several reasons. 
First, it underscores a clear contrast between \problemPLC, which is \classFPT when parameterized by $k$, and \problemLineCl. 
Second, \problemLineCl\ is closely related to $k$-means clustering, where instead of fitting $k$ lines, one seeks $k$ points (i.e., zero-dimensional subspaces) that best fit the data in the $L_2$ norm. 
Notably, the parameterized complexity of $k$-means clustering in $\mathds{R}^2$ (with parameter $k$) remains a longstanding open question \cite{cohen-addad2018bane}.

Our second lower bound concerns \problemHLC. Since in $\mathds{R}^2$ this problem is equivalent to \problemPLC, which is \classNP-complete, we know that the problem is \classParaNP-complete when parameterized by~$d$. On the other hand, by 
  the work of Langerman and Morin~\cite{langerman2005covering},  \problemHLC is \classFPT when parameterized by \( k \) and \( d \). 
\Cref{thm: objzerohard} refines  this complexity picture by proving that 
 \problemHLC
  is \classNPH even when \( k  = 2 \) (and~$d$ is large compared to~$k$). 

\begin{restatable}{thm}{objzerohard}
    \label{thm: objzerohard}
\problemHLC is \classNP-hard when $k=2$.
\end{restatable}


The lower bounds provided in \Cref{thm:hardness} rule out the existence of an algorithm for \problemLineCl  of running time 
$n^{o(k)}$. We complement this lower bound by providing an algorithm of running time $n^{O(k)}$. The algorithm is provided for the more general problem
 \projcl. To the best of our knowledge, this is the first  {\classXP} algorithm for \projcl{} parameterized by \( k \) and~\( d \).\footnote{As is common in computational geometry, we assume a real RAM model for computation.} Note that~$r \leq d$.


\begin{restatable}{thm}{xpalgo}
\label{theorem:algoforprojcl}
    \projcl{} can be solved in~$ n^{\mathcal{O}(\min\{dk(r+1),dk(d-r+1)\})}$ time.
\end{restatable}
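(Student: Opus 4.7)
The plan is to reduce \projcl{} to an enumeration of \emph{candidate partitions} of the input point set and to bound the number of such candidates using a sign-pattern argument. The starting point is the classical fact that, once a partition $(C_1, \ldots, C_k)$ of $S$ is fixed, the $r$-dimensional affine subspace minimising $\sum_{\bfx \in C_j} \dist(\bfx, A)^2$ on each $C_j$ is (generically) unique and can be computed in polynomial time by principal component analysis: it passes through the centroid of $C_j$ and is spanned by the top~$r$ singular vectors of the centred data matrix. Hence it suffices to enumerate a superset of the partitions that arise when each point of $S$ is assigned to its closest subspace under some choice of $(A_1, \ldots, A_k)$, compute the PCA cost of every candidate, and return the minimum.

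To count realisable closest-subspace partitions, we parameterise $k$-tuples of $r$-dimensional affine subspaces in~$\Rd$ in whichever way uses fewer real parameters. In the primal parameterisation, each $A_j$ is the affine hull of $r+1$ points in~$\Rd$, contributing $d(r+1)$ parameters, for a total of $N_1 = dk(r+1)$. In the dual parameterisation, each $A_j$ is the intersection of $d - r$ affine hyperplanes, each specified by a normal vector and a scalar offset, contributing $d(d-r+1)$ parameters per subspace, for a total of $N_2 = dk(d-r+1)$. We set $N = \min\{N_1, N_2\}$.

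For each input point $\bfx_i$ and each ordered pair $(j, j')$, the predicate ``$\bfx_i$ is closer to $A_j$ than to $A_{j'}$'' is the sign of a single polynomial in the chosen parameters. In the primal parameterisation, the Cayley--Menger-style identity
\[
\dist(\bfx, \operatorname{aff}(\bfy_0, \ldots, \bfy_r))^2 \cdot \det G_r(\bfy_0, \ldots, \bfy_r) = \det G_{r+1}(\bfx; \bfy_0, \ldots, \bfy_r),
\]
where $G_r$ and $G_{r+1}$ are the relevant Gram determinants, lets us cross-multiply the two squared distances and obtain a bisector polynomial of degree bounded by a constant depending only on $d$; the dual parameterisation gives an analogous bound. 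Altogether we produce $m = n \binom{k}{2}$ polynomials in $N$ real variables, so Warren's theorem on sign patterns bounds the number of distinct sign vectors, and hence the number of realisable partitions, by $n^{O(N)}$.

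The algorithm enumerates a sample point from every non-empty cell of the arrangement defined by these $m$ polynomials in $N$-space; this is standard and runs in $n^{O(N)}$ time. For every representative it reads off the induced partition, computes the optimal $r$-dimensional affine subspaces via SVD on each part, and evaluates the total squared-distance cost in polynomial time; it outputs the smallest. The main technical obstacle is the careful preparation of the bisector polynomials: the natural formulas for squared point-to-subspace distance are rational rather than polynomial, so one must cross-multiply by the strictly positive Gram determinants to obtain genuine polynomials whose degree is bounded independently of $n$ and $k$, and tie-breaking on the vanishing loci of these polynomials has to be handled without inflating the count. Once this is set up, Warren's theorem delivers the advertised running time of $n^{O(\min\{dk(r+1),\, dk(d-r+1)\})}$.
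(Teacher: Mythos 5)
Your proposal is correct in outline and follows the same overall strategy as the paper: parameterize the $k$ subspaces by $\min\{dk(r+1),dk(d-r+1)\}$ real parameters, form $\mathcal{O}(nk^2)$ bisector polynomials, obtain one sample point per realizable sign condition, read off the induced nearest-subspace partition, and run \pca{} on each part. Where you genuinely diverge is in how polynomiality of the bisectors is achieved and which real-algebraic tool does the counting/enumeration. The paper restricts the parameter space to the algebraic set of orthonormal frames plus offset points, so the squared point-to-subspace distance is literally a polynomial of degree at most $4$, and then invokes the Basu--Pollack--Roy sample-point theorem over that algebraic set, which meets \emph{every} realizable sign condition in $\{-1,0,+1\}^s$, zeros included. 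You instead use an unconstrained parameterization (affine hull of $r+1$ points, resp.\ intersections of $d-r$ hyperplanes) and clear denominators with Gram/Cayley--Menger determinants, then count strict sign patterns via Warren's theorem; this also yields $n^{\mathcal{O}(N)}$ (the degree $\mathcal{O}(d)$ is harmless in Warren's bound, and $k\le n$ may be assumed), and the cell-sample-point enumeration you call standard is the same Basu--Pollack--Roy machinery the paper cites. The one step you name but do not carry out---vanishing loci---is precisely what the paper's setup sidesteps: your cross-multiplied bisectors encode the distance comparison only where the Gram determinants are nonzero, and an optimal tuple may lie on a zero set (ties, or affinely dependent spanning points), so strict-sign cells alone do not obviously cover it. To close this, argue that an optimal tuple admits a nondegenerate representation and that any strict cell containing it in its closure induces, by continuity, a partition assigning every point to one of its closest subspaces at the optimum, whence \pca{} on that partition already attains the optimal cost; alternatively, enumerate all sign conditions including zeros, as the cited algorithm permits, which resolves both ties and degeneracies without inflating the $n^{\mathcal{O}(N)}$ bound.
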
 
Since \textsc{\(k\)-Means Clustering} is the (very) special case of \(\projcl\) with~\(r=0\), 
\Cref{theorem:algoforprojcl} significantly extends the work of Inaba et al.~\cite{inaba1994applications}, who gave an algorithm for \textsc{\(k\)-Means Clustering} running in~\(n^{\Oh(kd)}\) time. It is also worth noting that the tools we use to prove 
\Cref{theorem:algoforprojcl} are quite different from those employed by Inaba et al., 
who relied on weighted Voronoi diagrams. In contrast, the proof of 
\Cref{theorem:algoforprojcl} leverages a fundamental result from computational algebraic geometry, 
which computes a set of sample points that meets every realizable sign condition of a family 
of polynomials restricted to an algebraic set.





\subparagraph{Related Work.}
\problemPLC{} is {\sf NP}-hard and {\sf APX}-hard \cite{kumar2000hardness,broden2001guarding,Megiddo1982}. 
A number of results in parameterized complexity and kernelization in the literature is devoted to  \problemPLC and \problemHLC. 
Langerman and Morin~\cite{langerman2005covering} proposed an \(k^{dk + d} n^{\Oh(1)}\)-time algorithm for \problemHLC. Later, the running time was improved by Wang et al.~\cite{wang2010parameterized} to~\(\tfrac{k^{(d-1)k}}{1.3^k} n^{\Oh(1)}\).  
Afshani et al.~\cite{AfshaniBDN16} gave an algorithm of running time~$2^n\cdot n^{\Oh(1)}$. They also improved on the 
 \classFPT algorithm for \problemHLC in \(\mathbb{R}^3\). Their improved algorithm runs in~\(\left(\tfrac{Ck^2}{\log k^{1/5}}\right)^k n^{\Oh(1)}\) time.
They also presented a \(\left(\tfrac{Ck}{\log k}\right)^kn^{\Oh(1)}\)-time algorithm for \problemPLC. 
 Kratsch et al. show that the well-known $\mathcal{O}(k^2)$ kernel for \problemPLC is tight under standard complexity assumptions~\cite{KratschPR14}. A similar result exists for \textsc{Hyperplane Cover}~\cite{BoissonnatDGK18}.

%
%
%

 Due to its importance in theory as well as applications, the study of \projcl{} enjoys a long and rich history. Various approaches have been used to solve \projcl{}: kernel sets~\cite{DBLP:journals/jacm/AgarwalHV04, DBLP:journals/siamcomp/Har-PeledW04},
coresets~\cite{DBLP:journals/siamcomp/Har-PeledW04, DBLP:conf/fsttcs/VaradarajanX12},
total sensitivity~\cite{DBLP:conf/stoc/FeldmanL11,DBLP:conf/soda/LangbergS10,DBLP:conf/fsttcs/VaradarajanX12}, and
dimension reduction via sampling~\cite{DBLP:journals/toc/DeshpandeRVW06, DBLP:journals/dcg/ShyamalkumarV12}.
\iflong
There is also a significant body of work on coresets constructions \cite{DBLP:conf/stoc/DeshpandeV07,har2004coresets, DBLP:conf/stoc/HuangV20} and dimension reductions \cite{CharikaW25}  for \projcl in computational geometry.
\fi

\section{Preliminaries}
  
 An $r$-dimensional affine subspace~$A$ (also called an~$r$-flat) of a $d$-dimensional Euclidean space~$\mathbb{R}^d$ is a subset of $\mathbb{R}^d$ of the form
 $
  \mathbf{p} + V=\{\mathbf{p} + \mathbf{v}: \mathbf{v}\in V\},$
 where $V$ is an $r$-dimensional linear subspace of $\mathbb{R}^d$
 and $\mathbf{p}$ is a fixed point in $\mathbb{R}^d$.
 In other words, an $r$-dimensional affine subspace~$A$  is a set of points 
\( \mathbf{x} =  (x_1, x_2, \dots, x_d)^T \in \mathbb{R}^d \)
  represented by the system of linear equations~$\mathbf{A} \mathbf{x} = \mathbf{p}$, where 
\( \mathbf{A} \in \mathbb{R}^{(d-r) \times d} \) is a matrix representing the coefficients of the equations and~\( \mathbf{p} \in \mathbb{R}^{(d-r) \times 1} \) is a vector of constants.


For a point or vector~$\bfx = (x_1,x_2,\ldots,x_d)^T$, we use~$\bfx[i]$ to denote the~$i$\textsuperscript{th} entry~$x_i$.
For two points $\bfx = (x_1, x_2, \dots, x_d)^T, \bfy= (y_1, y_2, \dots, y_d)^T  \in \bbR^d$, we denote the Euclidean distance between $\bfx$ and $\bfy$ by~$\dist(\bfx,\bfy) = \sqrt{\sum_{i = 1}^d (x[i]-y[i])^2}$. The Euclidean distance $\dist(\bfx,A)$  from a point $\bfx$ to an affine subspace $A$ is the distance between $\bfx$ and its orthogonal projection on the subspace $A$ (also known as the perpendicular distance).
Let $H$ be a union of $k$ affine subspaces $A_1,A_2, \dots, A_k$. For a point  $\bfx\in\bbR^d$, we define the distance   $\dist(\bfx,H)$ as the Euclidean distance from  $\bfx$ to the closest $A_i$, that is 
$
\dist(\bfx,H)=\min_{1\leq i\leq k}\dist(\bfx,A_i).$ 
We denote the square of the Frobenius norm of a matrix $A$ by  $\norm{\mathbf{A}}^2_F=\sum_{i,j}a_{i,j}^2$.


\medskip\noindent\textbf{Complexity.}
A \emph{parameterized problem} is a language $Q \subseteq \Sigma^* \times \mathbb{N}$, where $\Sigma^*$ is the set of strings over a finite alphabet $\Sigma$. Specifically, an input of $Q$ is a pair $(I, k)$, where~$I \in \Sigma^*$ and $k \in \mathbb{N}$; $k$ is the \emph{parameter} of the problem.
A parameterized problem $Q$ is \emph{fixed-parameter tractable} (\classFPT) if it can be decided whether $(I, k) \in Q$ in $f(k) |I|^{\Oh(1)}$ time for some computable function $f$ that depends only on the parameter $k$. The parameterized complexity class \classFPT consists of all fixed-parameter tractable problems.
 A parameterized problem $L \in \Sigma^* \times \mathbb{N}$ is called slice-wise polynomial {\sf (XP)} if there exists an algorithm $\mathcal{A}$ and two computable functions $f, g: \mathbb{N} \rightarrow \mathbb{N}$ such that, given an instance~$(I, k) \in \Sigma^* \times \mathbb{N}$, the algorithm $\mathcal{A}$ correctly decides whether $(x, k) \in L$ in~$f(k)\cdot |I|^{g(k)}$ time. The parameterized complexity class containing all slice-wise polynomial problems is called \classXP.

The $\sf{W}$-hierarchy is a collection of computational complexity classes; we omit the technical definitions here.
\iflong
The following relation is known among the classes in the $\sf{W}$-hierarchy:
\[
\classFPT = \classW0 \subseteq \classW1\subseteq\classW2 \subseteq \cdots \subseteq \classW P.
\]
\fi
It is widely believed that $\classFPT \neq \classW1$, and hence, if a problem is hard for a class $\classW i$ with $i \geq 1$, then it is considered to be fixed-parameter intractable.
For a detailed introduction to parameterized complexity, we refer readers to the textbooks by Cygan et al.\ and by Downey and Fellows~\cite{cygan2015parameterized,DowneyF99}.
We also provide conditional lower bounds by making use of  the following complexity hypothesis formulated by  Impagliazzo, Paturi, and Zane   \cite{ImpagliazzoPZ01}.
 
\begin{quote}
\textbf{Exponential Time Hypothesis (ETH)}:  There is a positive real $s$ such that 3-CNF-SAT with $n$ variables and $m$ clauses cannot be solved in time $2^{sn}(n+m)^{\Oh(1)}$.
\end{quote}

\section{\textsf{W}[1]-hardness of projective clustering in the Euclidean plane}

In this section, we show that \problemLineCl{} is {\sf \classW1}-hard when parameterized by the number~$k$ of lines in the solution.

\hardness*

\begin{proof}
    We give a reduction from \textsc{Regular Multicolored Independent Set}.
    Therein, one is given a graph~$G=(V,E)$ and a partition of the $n$ vertices into~$V_1,V_2,\ldots,V_\ell$ (often referred to as colors of vertices) such that~$|V_i| = \nu = \nicefrac{n}{\ell}$ for each~$i \in [\ell]$ and each vertex~$v$ has exactly~$q$ neighbors (none of which have the same color as~$v$).
    The question is whether there exists an independent set containing exactly one vertex of each color.
    It is known that \textsc{Regular Multicolored Independent Set} is \classW1-hard parameterized by~$\ell$~\cite[Proposition 13.5.]{cygan2015parameterized} and cannot be solved in~$n^{o(\ell)}$ time unless the \classETH fails \cite{Chen20061346}.
    We will assume without loss of generality that~$\nu$ is divisible by~$4$, 
    $\nu > \ell^3$, $\ell > 10$, and~$V_i = \{w_1^i,w_2^i,\ldots,w_\nu^i\}$.

    The crucial step is a polynomial gap-reduction that constructs an instance~$(S,k,B)$ of \problemLineCl{} where $S$ is a \emph{multiset} of points in~$\mathbb{R}^{2}$ (that is, we allow points to have the same coordinates) with integer coordinates and $k=2\ell+4$ with the following property:
    \begin{itemize}
    \item If the original instance of \textsc{Regular Multicolored Independent Set} is a yes-instance then there are $k$ lines such that the sum of squares of distances from all points in~$S$ to the closest line in the solution is at most~$B$.
    \item Otherwise, if we have a no-instance of \textsc{Regular Multicolored Independent Set} then for any $k$ lines, the sum of squares of distances from all points in~$S$ to the closest line is at least $B+2$.    
    \end{itemize}

    We use this gap-reduction to argue that \problemLineCl{} is \textsf{W}[1]-hard when $S$ is a set of points. For this, we apply the gap-reduction to an instance $G$ of \textsc{Regular Multicolored Independent Set} and denote by $(S,k,B)$ the obtained instance of \problemLineCl{} where $N$ is the number of points. Let $\delta=\nicefrac{1}{3BN}$. Then for every inclusion maximal multiset of points $X\subseteq S$ with the same coordinates $(x,y)$, we construct a set of $|X|$ points $Y_X$ with distinct rational coordinates such that every point $p\in Y_X$ is at distance at most $\delta$ from $(x,y)$. As $|X|\leq N$, $Y_X$ can be constructed in such a way that the denominators of the coordinates of each point are at most $\nicefrac{N}{\delta}=3BN^2$, and the absolute values of the numerators are at most 
    $N+3BN^2\max\{|x|,|y|\}$. Thus, the encoding of the coordinates of the point of $Y_X$ is polynomial. Notice also that the sets $Y_X$ constructed for distinct $X\subseteq S$ are disjoint. We denote by $S'$ the obtained set of distinct points $\mathbb{R}^2$, and let $B'=B+1$. 

    Consider the instance $(S',k,B')$ of \problemLineCl{}. Suppose that $G$ has a multicolored independent set. Then, there are $k$ lines such that the sum of squares of distances from all points in~$S$ to the closest line is at most~$B$. Suppose that a point $p\in S$ is on distance $h$ from the closest line. Because $p$ is at distance at most $\delta$ from the corresponding point $p'\in S'$, the distance from $p'$ to the closest line is at most $h+\delta$. Since the distance between each point of $S$ and the closest line is at most $B$,
    $(h+\delta)^2=h^2+2h\delta+\delta^2\leq h^2+2B\delta+\delta^2$. By taking the sum over all points in $S'$, we obtain that 
    the sum of squares of distances from the points in~$S'$ to the closest line is at most~$B+N\delta^2+2NB\delta=B+\nicefrac{1}{9B^2N}+\nicefrac{2}{3}<B+1=B'$. 
    Hence,     
    $(S',k,B')$ is a yes-instance. Assume now that $G$ has no multicolored independent set. Then, for any any $k$ lines, the sum of squares of distances from all points in~$S$ to the closest line is at least~$B+2$. If a point $p\in S$ is at distance $h$ from some line then the corresponding point $p'\in S'$ is at distance at least $h-\delta$ from the same line.
    For the square of the distance, we have that it is at least  
    $(h-\delta)^2=h^2-2h\delta+\delta^2\geq h^2-2B\delta+\delta^2\geq h^2-2B\delta$. 
    This implies that for any 
     $k$ lines, the sum of squares of distances from all points in~$S'$ to the closest line is 
     at least $(B+2)-2BN\delta>(B+2)-1=B+1=B'$. Thus, $(S',k,B')$ is a no-instance of \problemLineCl. This proves that  \problemLineCl{} is \textsf{W}[1]-hard when $S$ is a set of points. 
     
    In the remaining part of the proof, we provide the gap-reduction for a multiset of input points.
    Before we present the reduction in detail, we first give a high-level overview.
    Our aim is to build a grid-like structure where any optimal solution contains~$\ell$ vertical and~$\ell$ horizontal lines.
    Each horizontal line and each vertical line correspond to picking a vertex in a solution.
    The aim is then to construct a set of points such that (i) for each horizontal line in the solution there is a vertical line in the solution that encodes the same vertex (to ensure that a solution selects precisely~$\ell$ vertices) and (ii) for each pair of horizontal and vertical lines in a solution, the corresponding vertices are not adjacent in the input graph.
    We do this by placing points on the intersection of two such lines whenever the two respective vertices share an edge or are two distinct vertices of the same color.
    Otherwise, we place a point on the horizontal line close to the intersection.
    Imagine for now that these points have distinct distances from a ``solution line'' such that no vertical line contains more than one such point.
    See \Cref{fig:highlevel} for an example.
        \begin{figure}[t!]
        \centering
        \begin{tikzpicture}
            \def\x{3}
            \def\y{4}
            \node[circle,draw,fill=blue,label=left:$a$] (a) at (-6.5,1){};
            \node[circle,draw,fill=blue,label=left:$b$] (b) at (-6.5,0){};
            \node[circle,draw,fill=blue,label=left:$c$] (c) at (-6.5,-1){};
            \node[circle,draw,fill=red,label=right:$d$] at (-5.5,1){} edge(a);
            \node[circle,draw,fill=red,label=right:$e$] at (-5.5,0){} edge(c);
            \node[circle,draw,fill=red,label=right:$f$] at (-5.5,-1){} edge(b);

            \node[] at (-4.3,1.5){$a$};
            \node[] at (-4.3,1){$b$};
            \node[] at (-4.3,.5){$c$};
            
            \node[] at (-4.3,-1.5){$d$};
            \node[] at (-4.3,-2){$e$};
            \node[] at (-4.3,-2.5){$f$};
            
            \node[] at (-3.3,2.8){$a$};
            \node[] at (-2.3,2.8){$b$};
            \node[] at (-1.3,2.8){$c$};
            
            \node[] at (1,2.8){$d$};
            \node[] at (2,2.8){$e$};
            \node[] at (3,2.8){$f$};
            
            \draw[dashed] (-\y,-2.5) to (\y+.2,-2.5);
            \draw[dashed,red] (-\y,-2) to (\y+.2,-2);
            \draw[dashed] (-\y,-1.5) to (\y+.2,-1.5);
            
            \draw[dashed] (-\y,.5) to (\y+.2,.5);
            \draw[dashed] (-\y,1) to (\y+.2,1);
            \draw[dashed,blue] (-\y,1.5) to (\y+.2,1.5);

            \draw[dashed,blue] (-3.3,\x-.5) to (-3.3,-\x);
            \draw[dashed] (-2.3,\x-.5) to (-2.3,-\x);
            \draw[dashed] (-1.3,\x-.5) to (-1.3,-\x);
            
            \draw[dashed] (3,\x-.5) to (3,-\x);
            \draw[dashed,red] (2,\x-.5) to (2,-\x);
            \draw[dashed] (1,\x-.5) to (1,-\x);
            
            \node[point,fill=black] at(-3.1,1.5) {};
            \node[point,fill=black] at(-2.3,1.5) {};
            \node[point,fill=black] at(-1.3,1.5) {};
            \node[point,fill=black] at(-3.3,1) {};
            \node[point,fill=black] at(-2.1,1) {};
            \node[point,fill=black] at(-1.3,1) {};
            \node[point,fill=black] at(-3.3,.5) {};
            \node[point,fill=black] at(-2.3,.5) {};
            \node[point,fill=black] at(-1.1,.5) {};
            
            \node[point,fill=black] at(1,1.5) {};
            \node[point,fill=black] at(2.2,1.5) {};
            \node[point,fill=black] at(3.2,1.5) {};
            \node[point,fill=black] at(1.2,1) {};
            \node[point,fill=black] at(2.3,1) {};
            \node[point,fill=black] at(3,1) {};
            \node[point,fill=black] at(1.3,.5) {};
            \node[point,fill=black] at(2,.5) {};
            \node[point,fill=black] at(3.3,.5) {};
            
            \node[point,fill=black] at(-3.3,-1.5) {};
            \node[point,fill=black] at(-2,-1.5) {};
            \node[point,fill=black] at(-1,-1.5) {};
            \node[point,fill=black] at(-3,-2) {};
            \node[point,fill=black] at(-1.9,-2) {};
            \node[point,fill=black] at(-1.3,-2) {};
            \node[point,fill=black] at(-2.9,-2.5) {};
            \node[point,fill=black] at(-2.3,-2.5) {};
            \node[point,fill=black] at(-0.9,-2.5) {};
            
            \node[point,fill=black] at(1.4,-1.5) {};
            \node[point,fill=black] at(2,-1.5) {};
            \node[point,fill=black] at(3,-1.5) {};
            \node[point,fill=black] at(1,-2) {};
            \node[point,fill=black] at(2.4,-2) {};
            \node[point,fill=black] at(3,-2) {};
            \node[point,fill=black] at(1,-2.5) {};
            \node[point,fill=black] at(2,-2.5) {};
            \node[point,fill=black] at(3.4,-2.5) {};
        \end{tikzpicture}
        \caption{A simple input instance of \textsc{Regular Multicolored Independent Set} with~$\ell =2$, $\nu = 3$, and~$q=1$ on the left. The right shows a set of points such that any set of~$\ell$ vertical and~$\ell$ horizontal lines covering at least~$\ell^2\nu + \ell(\nu-1+q)=18$ points corresponds to a colorful independent set for the left instance. The colored dashed lines represent the independent set~$\{a,e\}$. Note that whenever a vertical and a horizontal line of the solution cross, there are no points on the intersection.}
        \label{fig:highlevel}
    \end{figure}
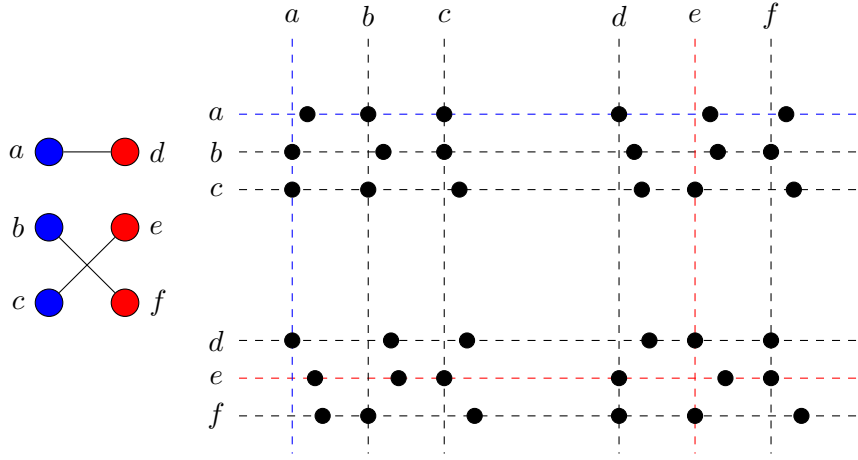
    Now, any independent set of size~$\ell$ containing exactly one vertex from each set~$V_i$ corresponds to a set of~$\ell$ horizontal and~$\ell$ vertical lines containing a maximum number of points.

    With this intuition in mind, there are two main obstacles to overcome.
    First, we want to ensure that the solution lines are actually (almost) horizontal and vertical and second, covering a maximum number of points is different from minimizing the sum of squares of distances from each point to the closest line.
    We start with the second obstacle.
    Assume that there are two lines (one vertical and one horizontal) that is part of any optimal solution that is far enough away from the previously described set of points so that they are never the closest line to any of these points.
    Let~$W$ be some sufficiently large (but also not too large) number.
    We place all vertical lines corresponding to vertices within the same set~$V_i$ closer together than lines corresponding to vertices in different sets (as already indicated in \Cref{fig:highlevel}).
    We do the same for all horizontal lines.
    By making the distance~$\dsmall$ between lines corresponding to vertices in different sets large enough, we can ensure that for any set~$V_i$ any optimal solution contains one horizontal line and one vertical line that corresponds to a vertex in~$V_i$ (assuming that all solution lines are horizontal or vertical).
    Next, we compute for each horizontal line the sum of squares of distances of all points on horizontal lines corresponding to vertices within the same set~$V_i$ except for those points that lie on the vertical line corresponding to the same vertex.
    For a line~$h$, let that number be~$\phi(h)$.
    Then, we place~$W-\phi(h)$ points on line~$h$ at a distance of one from the vertical line that is part of any optimal solution.
    We also place~$W$ points on each vertical line corresponding to picking a vertex at distance one from the horizontal line that is part of any solution and can now adjust the previous construction such that all points are covered by~$2\ell\nu$ vertical lines, that is, we no longer need to imagine that points placed because of edges in the input graph (or for pairs of vertices in the same set~$V_i$) have distinct distances.
    See \Cref{fig:highlevel2} for an illustration of the updated construction.
    \begin{figure}[t!]
        \centering
        \begin{tikzpicture}
            \def\x{3}
            \def\y{4}

            \node[] at (-4.3,1.5){$a$};
            \node[] at (-4.3,1){$b$};
            \node[] at (-4.3,.5){$c$};
            
            \node[] at (-4.3,-1.5){$d$};
            \node[] at (-4.3,-2){$e$};
            \node[] at (-4.3,-2.5){$f$};
            
            \node[] at (-3.3,2.8){$a$};
            \node[] at (-2.3,2.8){$b$};
            \node[] at (-1.3,2.8){$c$};
            
            \node[] at (1,2.8){$d$};
            \node[] at (2,2.8){$e$};
            \node[] at (3,2.8){$f$};
            
            \draw[dashed] (-\y,-2.5) to (\y+1.2,-2.5);
            \draw[dashed,red] (-\y,-2) to (\y+1.2,-2);
            \draw[dashed] (-\y,-1.5) to (\y+1.2,-1.5);
            
            \draw[dashed] (-\y,.5) to (\y+1.2,.5);
            \draw[dashed] (-\y,1) to (\y+1.2,1);
            \draw[dashed,blue] (-\y,1.5) to (\y+1.2,1.5);

            \draw[dashed,blue] (-3.3,\x-.5) to (-3.3,-\x-2);
            \draw[dashed] (-2.3,\x-.5) to (-2.3,-\x-2);
            \draw[dashed] (-1.3,\x-.5) to (-1.3,-\x-2);
            
            \draw[dashed] (3,\x-.5) to (3,-\x-2);
            \draw[dashed,red] (2,\x-.5) to (2,-\x-2);
            \draw[dashed] (1,\x-.5) to (1,-\x-2);

            \draw[violet] (-\y,-4.5) to (\y+1.1,-4.5);
            \draw[violet] (4.7,\x-.5) to (4.7,-\x-2);
            
            \draw[<->] (1.3,.8) to (2,.8);
            \node at(1.65,1.2) {$9$};
            
            \draw[<->] (1,.3) to (1.3,.3);
            \node at(1.15,.1) {$1$};

            \draw[<->] (0,1) to (0,.5);
            \node at(.2,.75) {$1$};

            \draw[<->] (4.7,-1.3) to (5,-1.3);
            \node at(4.85,-1) {$1$};
            
            \draw[<->] (-3.5,-4.8) to (-3.5,-4.5);
            \node at(-3.7,-4.65) {$1$};
            
            \draw[<->] (0,.-2) to (0,-4.5);
            \node at(.3,-3.25) {$\dsmall$};
            
            \draw[<->] (-2.3,-.5) to (2,-.5);
            \node at(0,-.2) {$\dsmall$};
            

            
            \node[point,fill=black] at(-3,1.5) {};
            \node[point,fill=black] at(-2.3,1.5) {};
            \node[point,fill=black] at(-1.3,1.5) {};
            \node[point,fill=black] at(-3.3,1) {};
            \node[point,fill=black] at(-2,1) {};
            \node[point,fill=black] at(-1.3,1) {};
            \node[point,fill=black] at(-3.3,.5) {};
            \node[point,fill=black] at(-2.3,.5) {};
            \node[point,fill=black] at(-1,.5) {};
            
            \node[point,fill=black] at(1,1.5) {};
            \node[point,fill=black] at(2.3,1.5) {};
            \node[point,fill=black] at(3.3,1.5) {};
            \node[point,fill=black] at(1.3,1) {};
            \node[point,fill=black] at(2.3,1) {};
            \node[point,fill=black] at(3,1) {};
            \node[point,fill=black] at(1.3,.5) {};
            \node[point,fill=black] at(2,.5) {};
            \node[point,fill=black] at(3.3,.5) {};
            
            \node[point,fill=black] at(-3.3,-1.5) {};
            \node[point,fill=black] at(-2,-1.5) {};
            \node[point,fill=black] at(-1,-1.5) {};
            \node[point,fill=black] at(-3,-2) {};
            \node[point,fill=black] at(-2,-2) {};
            \node[point,fill=black] at(-1.3,-2) {};
            \node[point,fill=black] at(-3,-2.5) {};
            \node[point,fill=black] at(-2.3,-2.5) {};
            \node[point,fill=black] at(-1,-2.5) {};
            
            \node[point,fill=black] at(1.3,-1.5) {};
            \node[point,fill=black] at(2,-1.5) {};
            \node[point,fill=black] at(3,-1.5) {};
            \node[point,fill=black] at(1,-2) {};
            \node[point,fill=black] at(2.3,-2) {};
            \node[point,fill=black] at(3,-2) {};
            \node[point,fill=black] at(1,-2.5) {};
            \node[point,fill=black] at(2,-2.5) {};
            \node[point,fill=black] at(3.3,-2.5) {};

            \node[point,fill=blue,label=right:$20$] at(\y+1,1.5) {};
            \node[point,fill=blue,label=right:$8$] at(\y+1,1) {};
            \node[point,fill=blue,label=right:$20$] at(\y+1,.5) {};
            
            \node[point,fill=blue,label=right:$20$] at(\y+1,-1.5) {};
            \node[point,fill=blue,label=right:$8$] at(\y+1,-2) {};
            \node[point,fill=blue,label=right:$20$] at(\y+1,-2.5) {};
            
            \node[point,fill=blue,label=below:$30$] at(-3.3,-4.8) {};
            \node[point,fill=blue,label=below:$30$] at(-2.3,-4.8) {};
            \node[point,fill=blue,label=below:$30$] at(-1.3,-4.8) {};
            
            \node[point,fill=blue,label=below:$30$] at(1,-4.8) {};
            \node[point,fill=blue,label=below:$30$] at(2,-4.8) {};
            \node[point,fill=blue,label=below:$30$] at(3,-4.8) {};

            \end{tikzpicture}
        \caption{The updated construction (for the same instance as in \Cref{fig:highlevel}). The violet lines represent lines that are part of any optimal solution and the blue dots with numbers next to them represent a collection of points at the same position.
        The value of~$\dsmall$ is at least~$500$.
        The colored dashed lines (together with the violet lines) represent an optimal solution corresponding to the independent set~$\{a,e\}$.
        The horizontal line representing vertex~$a$ is the closest line in the solution to~$4$ points each on the horizontal lines representing~$b$ and~$c$.
        Since the distance to points on the line for~$b$ is one, this adds a cost of four to the budget. Points on the line for~$c$ have distance~$2$, so each points contributes~$4$. Summed up over all of these points, this gives an additional budget of~$20$ which is balanced out by the blue vertex on the line representing~$a$. For the edge representing~$e$, there are 8 points at distance one (four on the lines for~$d$ and~$f$, respectively). Note that the blue point on the line for~$e$ represents~$8$ points. Figure not drawn to scale due to the orders of magnitude difference in the distances.}
        \label{fig:highlevel2}
    \end{figure}
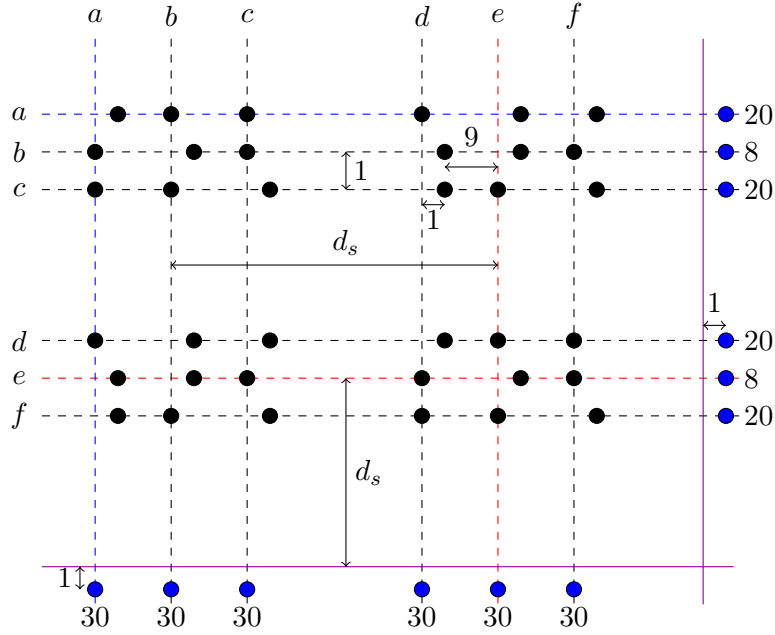
    We will show that these newly added points ensure that any set of~$\ell$ vertical and~$\ell$ horizontal lines that minimize the sum of squares of distances to all points are also lines that cover a maximum number of points.

    Finally, we want to ensure that lines in the solution are horizontal and/or vertical.
    Unfortunately, we were not able to ensure this property but the following construction ensures that all solution lines are \emph{almost} vertical or horizontal.
    Consider the construction in \Cref{fig:faraway} where~$\dlarge$ is much larger than~$\dsmall$.
    We will show that any line that covers as many points as there are points in one row or column of the constructed point set, has to contain the points of exactly one row or one column.
    We will then pick a suitable number~$p$, duplicate each point in the construction in \Cref{fig:highlevel2} $p$ times, and set the budget~$B$ correspondingly such that any line that encodes a vertex in some set~$V_i$ has a sum of squares of distances to points in the~$i$\textsuperscript{th} row or column of the construction in \Cref{fig:faraway} that is negligible while any line that is not almost vertical or horizontal has a point in any row or column of the construction such that the distance between this point and the line alone is larger than the budget~$B$.
    We also place a lot of vertices in the eight ``corners'' of the construction to ensure that there are two horizontal and two vertical lines that are part of any solution.
    This concludes the overview of the reduction.
    We now present the construction in more detail.

    We start by defining ${p=n^{10}}$, $W=n^{30}$, $\dsmall = n^{40}$, $\dlarge = n^{90}$, and the functions
    \begin{align*}
        \vartheta(i) &= \sum_{a=1}^{i} (3(i-a))^2 + \sum_{b=i}^{\nu} (3(\nu-b))^2, &
        \varphi(i) &= p\ell(\nu-1)\vartheta(i), \text{ and }
        \varphi'(i) = p \ell \nu \vartheta(i).
    \end{align*}
    We set~$k=2\ell+4$ and~$B = n^7 + (n-\ell) W + \ell \sum_{i=1}^\nu (W + \varphi(i)) - \ell W + \ell p(n - \nu + 1 - q - \ell)$.
    Note that~$B \leq n^{32}$.
    We then construct a set~$F$ of $8 \many + k^2+2k$ points as follows.
    We place points in a $\frac{k}{2} \times (k+2)$ grid~$G_h$ where the vertical distance between points is~$\dsmall$ and the horizontal distance is $\dlarge$.
    We leave $(\ell+1) \dsmall$ extra space between columns $\nicefrac{k+2}{2}$ and~$\nicefrac{k+2}{2}+1$ (this is where we will place the main part of the construction depicted in \Cref{fig:highlevel2}).
    We place~\many{}~points at each of the four corners of the grid (on top of the already existing point).
    We then repeat the process with a~$(k+2) \times \frac{k}{2}$ grid~$G_v$ where the vertical distance between points is $\dlarge$ and the horizontal distance is~$\dsmall$.
    We place the two grids such that the additional extra space forms a square in the middle of both grids.
    We call the set of all points placed so far~$F$ and the vertical/horizontal lines that each go exactly through $2 \many + k+2$ points the \emph{fixed lines} (we will show later that we can assume these four lines to be part of any optimal solution).
    An illustration of the points in $F$ is given in \Cref{fig:faraway}.
    \begin{figure}[t]
        \centering
        \begin{tikzpicture}
        \def\x{2.75}
            \foreach \i in {-2,-1,...,2}{
                \foreach \j in {-2,-1,...,2}{
                    \node[point] at(\i*.25,\j*.75-\x) {};
                    \node[point] at(\i*.25,\j*.75+\x) {};
                    \node[point] at(\i*.75-\x,\j*.25) {};
                    \node[point] at(\i*.75+\x,\j*.25) {};
                }
                \node[point] at(\i*.25,-5) {};
                \node[point] at(\i*.25,5) {};
                \node[point] at(-5,\i*.25) {};
                \node[point] at(5,\i*.25) {};
            }
            \node[point,fill=blue] at(-.5,-5) {};
            \node[point,fill=blue] at(.5,-5) {};
            \node[point,fill=blue] at(-.5,5) {};
            \node[point,fill=blue] at(.5,5) {};
            \node[point,fill=blue] at(-5,-.5) {};
            \node[point,fill=blue] at(5,-.5) {};
            \node[point,fill=blue] at(-5,.5) {};
            \node[point,fill=blue] at(5,.5) {};

            \draw[violet] (-5,-.5) to (5,-.5);
            \draw[violet] (-5,.5) to (5,.5);
            \draw[violet] (-.5,-5) to (-.5,5);
            \draw[violet] (.5,-5) to (.5,5);

            \draw (-.25,5) -- (0,5);
            \node at(-.125,5.3) {\dsmall};
            \draw (-2.75,-.25) -- (-2,-.25);
            \node at(-2.375,0) {\dlarge};
            \draw (-1.25,0) -- (-.5,0);
            \node at(-.875,.25) {\dlarge};
            \draw (-4.25,-.25) -- (-4.25,0);
            \node at(-4.6,-.125) {\dsmall};
            \draw (-.25,-2.75) -- (-.25,-2);
            \node at(0,-2.375) {\dlarge};
 
            \node[rectangle,draw,minimum width=.59cm,minimum height=.62cm,pattern=grid] at(0,0) {};
        \end{tikzpicture}
        \caption{An illustration of the construction of points in $F$ for~$\ell = 3$ (and~${k = 10}$). The blue dots represent $\many + 1$ points at the same coordinate. The four purple lines show the fixed lines and the rest of the construction happens in the box in the middle. Figure not drawn to scale as~$\dlarge$ is larger than~$\dsmall$ by many orders of magnitude.}
        \label{fig:faraway}
    \end{figure}
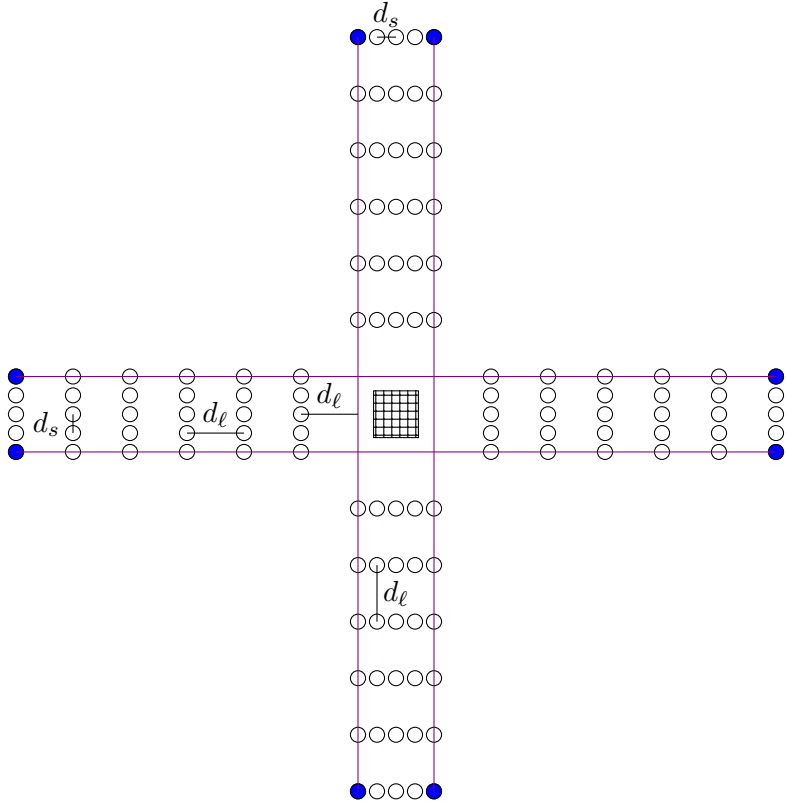

    Next, we construct the rest of the instance in the middle gap.
    We start with~$\ell$~bundles of $\nu$ horizontal lines each. That is, for each~$i \in [\ell]$ and each~$j\in [\nu]$, we define a horizontal line~$h_i^j$.
    The height of the~$\nicefrac{\nu}{2}$\textsuperscript{th} line in the~$i$\textsuperscript{th} bundle is equal to the height of points in~$F$ in the~$i+1$\textsuperscript{st} row of~$G_h$.
    The distance between~$h_i^j$ and~$h_{i}^{j+1}$ is~$3$ where~$h_i^j$ is the higher line.
    Next, we define vertical lines~$v_i^j$ and~$s_i^j$ for each~$i \in [\ell]$ and each~$j\in [\nu]$ as follows.
    The $x$-position of line~$v_i^{\nicefrac{\nu}{2}} $ is equal to the $x$-positions of points in~$F$ in the~$i+1$\textsuperscript{st} column of~$G_v$.
    The distance between~$v_i^j$ and~$v_{i}^{j+1}$ is~$10n^2$ where~$v_i^j$ is to the left.
    Line~$s_{i}^j$ is one to the left of~$v_i^j$.
    An illustration of these lines (and the input points we will place on these lines next) is given in \Cref{fig:internal}.
    \begin{figure}
        \centering
        \begin{tikzpicture}
            \def\x{3}
            \def\y{5}
            \draw[dashed] (-\y,-1.5) to (\y,-1.5);
            \draw[dashed] (-\y,-1) to (\y,-1);
            \draw[dashed] (-\y,-.5) to (\y,-.5);
            
            \draw[dashed] (-\y,.5) to (\y,.5);
            \draw[dashed] (-\y,1) to (\y,1);
            \draw[dashed] (-\y,1.5) to (\y,1.5);

            \draw[dashed] (-3.3,\x) to (-3.3,-\x);
            \draw[dashed] (-3,\x) to (-3,-\x);
            \draw[dashed] (-2.3,\x) to (-2.3,-\x);
            \draw[dashed] (-2,\x) to (-2,-\x);
            \draw[dashed] (-1.3,\x) to (-1.3,-\x);
            \draw[dashed] (-1,\x) to (-1,-\x);
            
            \draw[dashed] (3.3,\x) to (3.3,-\x);
            \draw[dashed] (3,\x) to (3,-\x);
            \draw[dashed] (2.3,\x) to (2.3,-\x);
            \draw[dashed] (2,\x) to (2,-\x);
            \draw[dashed] (1.3,\x) to (1.3,-\x);
            \draw[dashed] (1,\x) to (1,-\x);
            
            \draw[violet] (-\y,2.5) to (\y,2.5);
            \draw[violet] (-\y,-2.5) to (\y,-2.5);

            \draw[violet] (-4.5,-\x) to (-4.5,\x);
            \draw[violet] (4.5,-\x) to (4.5,\x);

            \draw[<->] (-3.3,2.8) to (1,2.8);
            \node at(-1.6,3.2) {$\dsmall$};

            \draw[<->] (1.3,2.8) to (2,2.8);
            \node at(1.65,3.2) {$10n^2-1$};

            \draw[<->] (3,2.8) to (3.3,2.8);
            \node at(3.15,3.2) {$1$};
            
            \draw[<->] (-3.5,1.5) to (-3.5,-.5);
            \node at(-3.8,.3) {$\dsmall$};

            \draw[<->] (-3.5,-1) to (-3.5,-1.5);
            \node at(-3.8,-1.25) {$3$};
            
            \node[point,fill=black] at(-3,1.5) {};
            \node[point,fill=black] at(-2.3,1.5) {};
            \node[point,fill=black] at(-1.3,1.5) {};
            \node[point,fill=black] at(-3.3,1) {};
            \node[point,fill=black] at(-2,1) {};
            \node[point,fill=black] at(-1.3,1) {};
            \node[point,fill=black] at(-3.3,.5) {};
            \node[point,fill=black] at(-2.3,.5) {};
            \node[point,fill=black] at(-1,.5) {};
            
            \node[point,fill=black] at(1,1.5) {};
            \node[point,fill=black] at(2.3,1.5) {};
            \node[point,fill=black] at(3.3,1.5) {};
            \node[point,fill=black] at(1.3,1) {};
            \node[point,fill=black] at(2.3,1) {};
            \node[point,fill=black] at(3,1) {};
            \node[point,fill=black] at(1.3,.5) {};
            \node[point,fill=black] at(2,.5) {};
            \node[point,fill=black] at(3.3,.5) {};
            
            \node[point,fill=black] at(-3.3,-.5) {};
            \node[point,fill=black] at(-2,-.5) {};
            \node[point,fill=black] at(-1,-.5) {};
            \node[point,fill=black] at(-3,-1) {};
            \node[point,fill=black] at(-2,-1) {};
            \node[point,fill=black] at(-1.3,-1) {};
            \node[point,fill=black] at(-3,-1.5) {};
            \node[point,fill=black] at(-2.3,-1.5) {};
            \node[point,fill=black] at(-1,-1.5) {};
            
            \node[point,fill=black] at(1.3,-.5) {};
            \node[point,fill=black] at(2,-.5) {};
            \node[point,fill=black] at(3,-.5) {};
            \node[point,fill=black] at(1,-1) {};
            \node[point,fill=black] at(2.3,-1) {};
            \node[point,fill=black] at(3,-1) {};
            \node[point,fill=black] at(1,-1.5) {};
            \node[point,fill=black] at(2,-1.5) {};
            \node[point,fill=black] at(3.3,-1.5) {};

            \foreach \i in {1,2,3}{
                \node[point,fill=blue] at(-4.7,\i*.5) {};
                \node[point,fill=blue] at(-4.3,\i*.5) {};
                \node[point,fill=blue] at(4.3,\i*.5) {};
                \node[point,fill=blue] at(4.7,\i*.5) {};
                \node[point,fill=blue] at(-4.7,-\i*.5) {};
                \node[point,fill=blue] at(-4.3,-\i*.5) {};
                \node[point,fill=blue] at(4.3,-\i*.5) {};
                \node[point,fill=blue] at(4.7,-\i*.5) {};

                \node[point,fill=blue] at(-\i-.3,2.7) {};
                \node[point,fill=blue] at(\i,2.7) {};
                \node[point,fill=blue] at(-\i-.3,2.3) {};
                \node[point,fill=blue] at(\i,2.3) {};
                \node[point,fill=blue] at(-\i-.3,-2.7) {};
                \node[point,fill=blue] at(\i,-2.7) {};
                \node[point,fill=blue] at(-\i-.3,-2.3) {};
                \node[point,fill=blue] at(\i,-2.3) {};
            }

            \draw[<->] (-4.7,1.7) to (-4.5,1.7);
            \draw[<->] (-4.3,1.7) to (-4.5,1.7);
            \node at(-4.6,2) {$1$};
            \node at(-4.4,2) {$1$};
            \node at(-5.3,1.5) {$h_1^1$};
            \node at(-5.3,-1) {$h_2^2$};
            \node at(-3,-3.3) {$v_1^1$};
            \node at(2.3,-3.3) {$v_2^2$};
            \node at(-3.3,-3.3) {$s_1^1$};
            \node at(1,-3.3) {$s_2^1$};
        \end{tikzpicture}
        \caption{An illustration of the construction of points in~$S \setminus F$ for an instance with $\ell = 2$ and~$\nu = 3$ vertices of each color. Black dots represent~$p$ points at the same coordinate, blue dots represent roughly $W$ points at the same coordinate, and the violet lines represent the fixed lines. Figure not drawn to scale since~$\dsmall$ is too big.}
        \label{fig:internal}
    \end{figure}
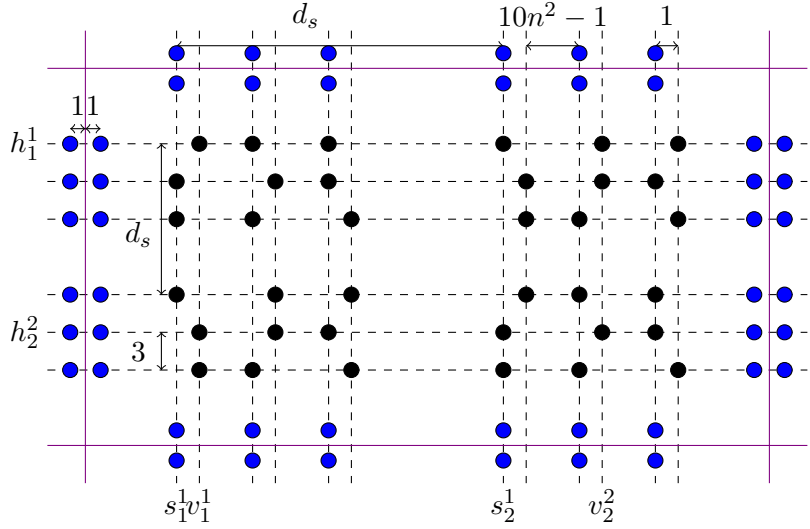
    For each pair~$w_j^i,w_{j'}^{i'}$ of vertices in the input graph~$G$ with~$i \neq i'$, we place~$p$ points on the intersection of~$h_i^j$ and~$s_{i'}^{j'}$ if~$\{w_j^i,w_{j'}^{i'}\} \in E$ and on the intersection of~$h_i^j$ and~$v_{i'}^{j'}$ otherwise.
    Note that we also place points symmetrically on the intersection of~$h_{i'}^{j'}$ and~$s_i^j$ if the two vertices are adjacent and on the intersection of~$h_{i'}^{j'}$ and~$v_i^j$ if they are not.
    For~$i = i'$, we place~$p$ points on the intersection of~$h_i^{j}$ and~$v_{i'}^{j'}$ if~$j \neq j'$ and on the intersection of~$h_i^j$ and~$s_{i'}^{j'}$ if~$j = j'$.
    Let~$X$ be the set of points placed in the previous step.
    Note that we place exactly~$n p = n^{11}$ points of~$X$ on each line~$h^j_i$, exactly~$(q+\nu-1) p < n^{11}$ points on each line~$s_i^j$, and exactly~$(n - q - \nu + 1)p < n^{11}$ points on each line~$v_i^j$.
    Finally, for each~$i \in [\ell]$ and each~$j \in [\nu]$, we place~$\frac{W}{4}$ points on~$s_i^j$ one above and below each of the two horizontal fixed lines (that is, we place~$W$ points on each line~$s_i^j$).
    Let~$Z_v$ be the set of all of these points.
    Note that~$|Z_v| = nW$.
    We also place~$\frac{W + \varphi(j)}{4} \geq \frac{W}{4}$ points on each line~$h_i^j$ one to the left and one to the right of each of the two vertical fixed lines.
    Let~$Z_h$ be the set of all these points.
    Note that~$|Z_h| = \ell \big( \sum_{j=1}^\nu W + \varphi(j)\big)$.
    This concludes the construction.

    \medskip

Since the construction can be computed in polynomial (in $n$) time as there are a polynomial number of points, all coordinates are integers and the largest distance between points is in~$O(\ell \dlarge) \subseteq \poly(n)$ and since~$k \in O(\ell)$, it only remains to show correctness.

\iflong
To this end, first assume that the original instance of \textsc{Regular Multicolored Independent Set} is a yes-instance, that is, $V$ contains a multicolored independent set~$I = \{w_{j_1}^1,w_{j_2}^2,\ldots,w_{j_\ell}^\ell\}$.
We take the four fixed lines and the lines~$h_i^{j_i}$ and~$s_i^{j_i}$ for each~$i \in [\ell]$ into the solution. Note that we picked exactly~$2\ell+4 = k$ lines.
Moreover, each of the 8 positions where~$\many$ points are placed on the same position are hit exactly by the four fixed lines. Each of the other~$2(k+2)(\frac{k}{2}-2) \leq \ell^3$ points in~$F$ have distance at most~$10n^2\frac{\nu}{2}$ from one of the picked lines and hence all these points contribute at most~$\ell^3 (5n^2\nu)^2 \leq 25\ell n^6 \leq n^{7}$ to the budget (the sum of squares of distances from each point to the closest line) in total.
Next, note that all points in~$Z_h \cup Z_v$ have distance one from one of the four fixed lines.
Moreover, all points on the solution lines~$h_i^{j_i}$ and~$s_i^{j_i}$ have distance~0 to the solution.
Hence, all points in~$Z_v$ contribute~$(n-\ell)W$ to the budget.
Similarly, the points in~$Z_h$ contribute~$|Z_h| - \sum_{i \in [\ell]} (W + \varphi(j_i))$ to the budget. 
Finally, we analyze the points in~$X$.
Note that for each~$i \in [\ell]$, all points on lines~$h_i^j$ for any~$j \in [\nu]$ are closest to~$h_i^{j_i}$ except for those points that are on lines~$s_{i'}^{j_{i'}}$ or~$v_{i'}^{j_{i'}}$ for some~$i' \in [\ell]$. Moreover, all of the points on~$h_i^j$ have distance~$3 (j - j_i)$ from~$h_i^{j_i}$ if~$j_i \leq j$ and distance~$3 (j_i - i)$ if~$j_i \geq j$. Since there are $p \ell (\nu-1)$ points on each of these lines that are closest to~$h_i^{j_i}$, they contribute exactly~$p(\nu-1)\ell(\sum_{a=1}^{j_i}(3(j_i-a))^2 + \sum_{b=j_i}^{\nu} (3(\nu - b))^2) = \varphi(j_i)$ to the budget (which summed up over all lines~$h_i^j$ is precisely the difference between~$|Z_h|-W\ell$ and what the points in~$Z_h$ contribute to the budget). 

Finally for each $i \in [\ell]$, all points on~$s_i^{j_i}$ contribute~0 to the budget and all points on~$v_i^{j_i}$ contribute one each except for the points on~$h_{i}^{j_i}$.
Since we chose the solution lines corresponding to an independent set, all these points exist and hence all points on~$v_i^{j_i}$ contribute~$p (n - q - \nu + 1- \ell)$ to the budget.

Overall, the sum of squares of distances of all points to the closest line in the solution is at most 

$$n^7 + (n-\ell)W + \ell \big( \sum_{j=1}^\nu W + \varphi(j)\big) - \ell W + \ell p(n-\nu + 1-q-\ell) = B,$$ concluding the proof for the forward direction.

Now assume that the constructed instance is a yes-instance.
First, note that each point has distance at most~$\sqrt{B} \leq d_s$ from one of the solution lines (as otherwise this single point contributes more than~$B$ to the budget).
Since~$F$ contains points on~$k^2+2k$ different positions and the solution contains~$k$ lines, on average, each line in the solution has distance at most~$d_s$ from~$(k+2)$ positions with points in~$F$.
Assume such a line~$L$ is not close to all~$(k+2)$ positions in one row of $G_h$ or one column of~$G_v$.
    Moreover, assume without loss of generality that~$L$ is close to at least~$\frac{k+2}{2}$ positions in~$G_h$ (it is close to at least that many points in one of the two grids~$G_h$ and~$G_v$ and the other case is symmetric).
    Since~$\frac{k+2}{2} > \frac{k}{2}$, it holds that~$L$ is close to positions in at least two different columns of~$G_h$.
    As the minimal distance between two columns is~$d_\ell$ and the maximum distance between two rows is at most~$(k+2) d_s$, it holds that the inclination of~$L$ is between~$-\frac{(k+2) d_s + 2\sqrt{B}}{d_\ell-2\sqrt{B}} \geq -\frac{1}{n^{48}}$ and~$\frac{(k+2) d_s + 2\sqrt{B}}{d_\ell-2\sqrt{B}} \leq \frac{1}{n^{48}}$.
    Note that~$L$ cannot be close to any point in~$F$ on~$G_v$ as the vertical distance between such a point and any point in~$G_h$ is at least~$d_\ell$ and the inclination of~$L$ over the width of~$G_h$~($(k+2) \cdot d_\ell$) only allows for a change of at most~$(k+2)^2 d_s$ which is less than~$d_\ell - 2\sqrt{B}$.
    Hence~$L$ is close to~$k+2$ points in~$F$ on~$G_h$.
    Moreover, $L$ is close to exactly one point in each column of~$G_h$.
    Finally, if~$L$ is close to points in different rows of~$G_h$, then the inclination of~$L$ is larger than~$y =\frac{\dsmall-2\sqrt{B}}{(k+3)\dlarge}$ (or smaller than~$-y$).
    Hence, $L$ cannot be close to two points in the same row (as this would allow for a maximum absolute incline of~$\frac{2\sqrt{B}}{\dlarge} < y$).
    As~$G_h$ contains only~$\frac{k}{2} < k+2$ rows, this contradicts that~$L$ is close to~$k+2$ points in~$G_h$.
    This implies that each line in the solution is close to at most~$k+2$ to points in~$F$.
    By pigeonhole principle, each line in the solution is thus close to exactly~$k+2$ points in~$F$ and hence each line in the solution is close to one (distinct) row in~$G_h$ or one (distinct) column in~$G_v$.
    
    We next analyze the fixed lines.
    As proven above, each line stays within a band of width~$\sqrt{B}$ around one row of~$G_h$ or one column of~$G_v$.
    Note that in the top- and bottom-most rows of~$G_h$ there are $\many$ points at the very left and very right.
    Hence, the solution line has to stay within distance at most~$\sqrt{\frac{B}{\many}} \leq \frac{1}{n^{34}}$ from the fixed line.

    Next, consider a solution line~$L$ that is close to the~$i$\textsuperscript{th} row of~$G_h$ for some~${i \in [\frac{k}{2}] \setminus \{1,\frac{k}{2}\}}$.
    If~$L$ is not the closest line to four positions where points of~$Z_h$ are placed, then we can replace~$L$ by a line~$L'$ that goes through four such positions exactly and this will improve the solution as we reduce the solution by at least~$\frac{W}{4}$.
    Since~$X$ contains less than~$n^2p = n^{12}$ points and the distance from the new line~$L'$ to each point in~$X$ on some line~$h^{j'}_i$ (points close to the~$i$\textsuperscript{th} row of~$G_h$) is at most~$3\nu$, we increase the budget by at most~$9\nu^2n^{12} < \frac{W}{n}$ for the points in~$X$.
    Moreover, $L'$ also has distance at most~$3\nu$ from each point in~$F$ that is close to~$L$.
    Thus, replacing~$L$ by~$L'$ indeed improves the solution.
    Assuming that~$L$ is closest to four positions with more than~$\frac{W}{4}$ points of~$Z_h$ each, assume towards a contradiction that~$L$ is closest to some position on two different lines~$h_i^j,h_i^{j'}$ for some~$j \neq j' \in [\nu]$.
    Assume without loss of generality that~$j < j'$ and that~$L$ has distance at most~$1$ to~$\frac{W}{4}$ points of~$h_i^j$ left of the position where it has distance at most~$1$ from~$\frac{W}{4}$ points on~$h_i^{j'}$.
    Then, $L$ has an incline of at least~$\frac{1}{\ell d_s} > \frac{1}{n^{41}} > \frac{1}{n^{47}}$, a contradiction to the fact that~$L$ has distance at most~$\sqrt{B}$ from all points in the~$i+1$\textsuperscript{st} row of~$G_h$.
    Thus, $L$ has distance at most~$1$ from some line~$h_i^j$ throughout the entire middle part of the construction (the area where~$G_h$ and~$G_v$ overlap).
    Analogously, each solution line~$L$ that is close to some column of~$G_v$ (except the fixed lines) has to reduce the budget by~$W$ and hence stays within distance one of some line~$s_i^j$ in the middle part.

    Let us now give a lower bound on the budget used by all points in a solution assuming that all points in~$X$ are closest to a ``horizontal'' line (one that has distance at most one from a line~$h_i^j$ for some~$i \in [\ell]$ and~$j \in [\nu]$).
    We will assume that all points in~$F$ do not contribute anything (they just enforce that a solution has the structure as analyzed above).
    Let us start with the case where all lines of the solution are~$h_i^j$ or~$s_{i}^j$ for some~$i \in [\ell]$ and some~$j \in [\nu]$.
    Then, all but~$\ell W$ points in~$Z_v$ (these are~$nW - \ell W$ many) have distance~$1$ to a solution line and the remaining~$\ell W$ points have distance~$0$ to a line in the solution.
    For each~${i \in [\ell]}$, let~$j_i$ be the line such that the solution contains line~$h_i^j$.
    All points in~$X$ that are closest to~$h_i^{j_i}$ have distance~$pn(\sum_{a=1}^{j_i}(3(j_i-a))^2 + \sum_{b=j_i}^{\nu} (3(\nu - b))^2) = \varphi'(j_i)$ from~$h_i^{j_i}$ combined.
    The points in~$Z_h$ which lie on~$h_i^{j_i}$ contribute nothing and all points in~$Z_h$ that lie on some line~$h_i^j$ for some~$j \neq j_i$ contribute one each.
    Hence, all points in~$Z_v \cup X \cup Z_h$ contribute~$(n-\ell) W + |Z_h| - \ell W + \sum_{i=1}^{\ell}(\varphi'(j_i) - \varphi(j_i))$ combined.
    Hence, even this lower bound (ignoring the fact that some points in~$X$ can be closest to a vertical line) is equal to
    \begin{align*}
    &B - n^7 - \ell p (n - \nu + 1 - \ell -q) + \sum_{i=1}^{\ell} (\varphi'(j_i) - \varphi(j_i))\\ =\ &B - n^7 - \ell p (n - \nu + 1 - \ell -q) + \sum_{i=1}^{\ell}p \ell \vartheta(j_i) > B.
    \end{align*} 
    Here, the last inequality follows from~${\vartheta(i) = (\sum_{a=1}^{i}(3(i-a))^2 + \sum_{b=i}^{\nu} (3(\nu - b))^2 > \nu^2 > n}$ for all~$i \in [\nu]$ and~$\ell p > n^7$.
    Next, note that for each pair of lines~$h_i^{j_i}$ and~$s_{i'}^{j'}$ in the solution, it holds that exactly~$p(\nu-1)$ of the points which we previously assumed to be closest to~$h_i^{j_i}$ are actually closer to~$s_{i'}^{j'}$.
    These points have a combined square of distance  from~$h_i^{j_i}$ of
    \[p (\sum_{a=1}^{j_i-1}(3(j_i-a))^2 + \sum_{b=j_i+1}^{\nu} (3(\nu - b))^2) = p \vartheta(j_i).\]
    Summed over all combinations of lines, this reduces the budget by~$\sum_{i=1}^{\ell}p\ell \vartheta(j_i)$.
    Next, if we chose lines~$h_i^{j_i}$ and~$s_{i'}^{j'}$ such that~$j' = j_i$ whenever~$i = i'$ and~$\{w^i_{j_i},w^{i'}_{j'}\} \notin E$ whenever~$i \neq i'$, then $v_{i'}^{j_i}$ contains exactly~$p(n-\nu+1-\ell - q)$ points that are not on some solution line~$h^{i'}_{j_{i'}}$.
    The updated lower bound is hence
    \[B - n^7 - \ell p (n-\nu+1-\ell-q) + \sum_{i=1}^{\ell} p \ell \vartheta(j_i) - \sum_{i=1}^{\ell}p\ell \vartheta(j_i) + \ell p (n-\nu+1-\ell-q) = B - n^7.\]
    If we pick lines differently, then this increases the budget by at least~$p > n^7$, which proves that if we restrict our solution to contain lines~$h_i^j$ and~$s_i^j$ exactly, then it must correspond to a colorful independent set.
    Moreover, the gap for no-instances (with the above restrictions) is~$p - n^7$.
    
    To conclude the proof, note that we already assumed that points in~$F$ do not contribute to the budget and we cannot reduce the contribution of vertices in~$Z_h \cup Z_v$ by adjusting the solution lines slightly since each line will be closest to four positions with~$\frac{W}{4}$ points each and in the considered solution, all these points are hit exactly (so they do not contribute to the budget).
    As analyzed above, each line in the solution is within distance~$1$ from~$h_i^{j}$ or~$s_{i}^j$ for some~$i\in [\ell]$ and some~$j \in [\nu]$.
    For a solution line~$L$ that is within distance one from~$h_i^{j}$, let~$\varepsilon_1\geq 0$ be the distance between~$L$ and~$h_i^j$ at the left fixed line and let~$\varepsilon_2\geq 0$ be the distance at the right fixed line.
    Note that the distance between each point in~$X$ and~$L$ and~$h_i^j$, respectively, differs by at most~$\varepsilon = \max(\varepsilon_1,\varepsilon_2)$.
    As the distance between~$h_i^j$ and a point in~$X$ which is assigned to~$L$ in the solution is at most~$3n\nu$, the contribution of the point if we replace~$h_i^j$ by~$L$ decreases by at most~$(3n\nu)^2 - (3n\nu-\varepsilon)^2 \leq 6n\nu\varepsilon$.
    However, the contribution of at least~$\frac{W}{4}$ points in~$Z_h$ increases by at least~$\varepsilon^2$ each.
    As~$|X| = n^{12}$, for this exchange to decrease the budget, it has to hold that $6n^{13}\nu\varepsilon > \frac{W}{4}\varepsilon^2$, which implies that~$\varepsilon \leq \frac{1}{n^{15}}$.
    Hence, the overall reduction in budget we can get from all vertices in~$X$ combined if we slightly shift or tilt all horizontal lines in the solution is at most~$|X|((3n\nu)^2-(3n\nu-\nicefrac{1}{n^{15}})^2) \leq n^{12}\frac{6\nu}{n^{15}} < 1$.
    The argument for lines that are close to a column of~$G_v$ is basically the same.
    Note that since~$p - n^7 > 16$, even with these additional budget savings, we cannot afford to pick lines into a solution that do not correspond to an independent set as a single additional position with~$p$ points where the distance is more than~$\frac{1}{2}$ increases the budget to above~$B+2$.
    This concludes the proof.
\else
\begin{claim}[$\star$]
    \label{lem:hardness-correctness}
    $(G, (V_1, \dots, V_\ell))$ is a yes-instance of \textsc{Regular Multicolored Independent Set} 
    if and only if
    $(S,k,B)$ is a yes-instance of \problemLineCl{}.
\end{claim}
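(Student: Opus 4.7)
The plan is to establish both directions by carefully accounting, position by position, for how each group of points in the construction ($F$, $Z_h$, $Z_v$, and $X$) contributes to the sum of squared distances, and then combining these into a tight bound around~$B$.

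For the forward direction, given a multicolored independent set $I=\{w^1_{j_1},\dots,w^\ell_{j_\ell}\}$, I would take as the solution the four fixed lines together with the $2\ell$ lines $h_i^{j_i}$ and $s_i^{j_i}$; this is exactly $k=2\ell+4$ lines. The four fixed lines perfectly cover the $8\many$ massive clusters in~$F$, and the at most~$\ell^3$ remaining grid points of~$F$ each lie within~$O(n^2\nu)$ of one of the chosen lines, contributing at most~$n^7$ in total. For~$Z_v \cup Z_h$, every point sits at distance~$1$ from one of the fixed lines, except the points that lie exactly on a chosen $s_i^{j_i}$ or $h_i^{j_i}$; this yields the terms $(n-\ell)W + |Z_h| - \sum_i(W+\varphi(j_i))$. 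For~$X$, the key observation is that for each~$i$ every point on some~$h_i^j$ is closest to~$h_i^{j_i}$ \emph{except} those on the vertical lines $s_{i'}^{j_{i'}}$ and $v_{i'}^{j_{i'}}$ for $i'\neq i$: using that $I$ is independent and colorful, this exception set has size $p\ell(\nu-1)$ per row bundle, producing exactly a $\varphi(j_i)$ term per $i$. The leftover points on each $v_i^{j_i}$ are the $p(n-\nu+1-q-\ell)$ points not redirected to an~$s$-line. Summing everything yields exactly~$B$.

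For the backward direction, the harder step is structural: showing that any solution of cost at most~$B$ must be essentially the axis-aligned solution just described. First, since any point contributes at most~$B$, every point is within distance $\sqrt{B}\le \dsmall$ of some solution line. A pigeonhole argument on the $k(k+2)$ positions of~$F$ forces each solution line to be close to $k+2$ such positions. Using that $\dlarge\gg\dsmall$, any line close to points in two different columns of~$G_h$ must have inclination at most~$1/n^{48}$, and the vertical separation~$\dlarge$ then prevents it from also touching $G_v$ points; a similar slope argument forbids it from covering two points of the same row. Consequently each solution line is close to exactly one row of~$G_h$ or one column of~$G_v$. Using the~$\many$ ``corner'' clusters, the four outermost lines must be within $1/n^{34}$ of the fixed lines.

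The remaining $2\ell$ lines must each save roughly~$W$ from the budget by passing within distance~$1$ of the four dense $Z_h$- or $Z_v$-clusters of a single row bundle or column bundle; otherwise one can re-route to such a cluster at cost less than~$W/n$ from the $X$-points. A slope argument then pins each such line to lie within distance~$1$ of a single~$h_i^{j_i}$ (or~$s_i^{j_i}$) throughout the middle region. At this point I evaluate the cost: if all $X$-points were closest to their horizontal line the budget would exceed~$B$ by~$\sum_i p\ell\vartheta(j_i)$; each pair consisting of a horizontal $h_i^{j_i}$ and a vertical $s_{i'}^{j'}$ reclaims exactly~$p\vartheta(j_i)$ from~$p(\nu-1)$ rerouted points, and this reclaim only balances out to~$B$ when the verticals are precisely $s_{i'}^{j_{i'}}$ chosen so that $\{w^i_{j_i},w^{i'}_{j_{i'}}\}\notin E$ and the same index is used on the diagonal $i=i'$, i.e.\ the chosen indices form a multicolored independent set. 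Any deviation costs at least one extra $p$-weighted position of squared distance at least~$1/4$, far exceeding the slack. Finally, to rule out that a small tilt or shift of the near-axis-aligned lines could save enough budget to cheat, I would observe that a shift of magnitude~$\varepsilon$ on a horizontal line saves at most~$6n\nu\varepsilon$ per $X$-point but loses~$\varepsilon^2$ per $Z_h$-point at the fixed lines, which forces $\varepsilon\le 1/n^{15}$ and total savings below~$1$, insufficient to bridge the gap. This matches the two instances of $(S,k,B)$ with yes-instances of the original problem.
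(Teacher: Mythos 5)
Your proposal follows essentially the same route as the paper's own proof: the identical forward-direction accounting (fixed lines plus $h_i^{j_i}$, $s_i^{j_i}$, with the $n^7$, $(n-\ell)W$, $\varphi(j_i)$, and $p(n-\nu+1-q-\ell)$ terms summing to $B$), and the identical backward-direction structure (pigeonhole plus slope bounds on $F$, pinning lines within $1/n^{34}$ of the fixed lines and within distance $1$ of a single $h_i^j$ or $s_i^j$ via the $W/4$-clusters, the $\vartheta$/$\varphi'$ budget lower bound that balances only for a multicolored independent set, and the final $\varepsilon\le 1/n^{15}$ perturbation argument). It is correct in outline and matches the paper's argument step for step.
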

For space reasons, the proof of correctness is deferred to a full version.
\fi
\end{proof}


\section{\classNP-hardness of covering points with two subspaces}

\newcommand{\linf}{\ell_{\infty}}
\newcommand{\ball}{\mathcal{B}}
\newcommand{\ourproblem}{\textsc{ourproblem}\xspace}
\newcommand{\dsfull}{\textsc{Dominating Set}\xspace}
\newcommand{\ds}{\textsc{DS}\xspace}
\newcommand{\reals}{\mathbb{R}}
\newcommand{\nz}{\textsc{NZ}}
\newcommand{\z}{\textsc{Z}}
\newcommand{\bigoh}{\mathcal{O}}

\newcommand{\ecnpd}{\textsc{ECNP-d}\xspace}
\newcommand{\ecnp}{\textsc{ECNP}\xspace}

In this section, we show that \problemHLC is \classNP-hard even when $k=2$. 
To this end, we present a reduction from \textsc{Equal-Cardinality Number Partition} (\textsc{ECNP}, in short).
In \ecnp, we are given a set $W = \{ w_1, \dots, w_{2n} \}$ of $2n$ positive integers that sum to $2t$, and the question is whether $W$ can be partitioned into two disjoint subsets $R_1$ and $R_2$ such that $|R_1| = |R_2| = n$, and $\sum_{w \in R_1} w = \sum_{w \in R_2} w = t$.
\ecnp is known to be \classNP-hard \cite{GareyJ79}.

Let \textsc{Equal-Cardinality Number Partition with distinct numbers} (\textsc{ECNP-d}, in short) denote the restriction of the problem where the numbers in the input are distinct. 
It is a folklore knowledge that \ecnpd is \classNP-hard (the result can be shown by disturbing the integers in the input of \ecnp). 
The basic idea of the following reduction is to capture the $n$ distinct numbers that add up to $t$ via a hyperplane.

\objzerohard*
\begin{proof}
    Let $(W = \{ w_1, \dots, w_{2n} \}, t)$ denote an instance of \ecnpd.
    We construct a point set $P$ in $\mathbb{R}^{n-1}$ as follows.
    For each $w_i \in W$, we add the point:
    \[
    \mathbf{p_i} = (v_i, v_i^2, \dots, v_i^{n-2}, v_i^n),
    \]
    where $v_i = w_i - t/n$.
    Let $V = \{ v_1, \dots, v_{2n}\}$ denote the set of shifted values.
    Observe that an equal-cardinality partition in $W$ yields an equal-cardinality partition in $V$ where each part has sum $0$.
    The total number of points in $P$ is $2n$. 
    We claim that $P$ can be covered by exactly $2$ hyperplanes if and only if $V$ admits a valid equal-cardinality partition with sum $0$ (which is true if and only if $W$ admits a valid equal-cardinality partition).
    
    Suppose there exists a valid partition $(R_1, R_2)$ of $V$ where $|R_1| = |R_2| = n$ and the elements of each subset sum to exactly $0$.
    We define two polynomials of degree $n$:
    \[
    Q_1(y) = \prod_{v \in R_1} (y - v) = y^n - \left(\sum_{v \in R_1} v\right)y^{n-1} + \dots + (-1)^n \prod_{v \in R_1} v 
    \]
    and 
    \[
    Q_2(y) = \prod_{v \in R_2} (y - v) = y^n - \left(\sum_{v \in R_2} v\right)y^{n-1} + \dots + (-1)^n \prod_{v \in R_2} v.
    \]
    Let the coefficients of $Q_1(y)$ be $a_n, a_{n-1}, \dots, a_0$ such that $Q_1(y) = \sum_{j=0}^n a_j y^j$, and let the coefficients of $Q_2(y)$ be $b_n, b_{n-1}, \dots, b_0$. 
    Note that $a_{n-1} = b_{n-1} = 0$ since the parts sum to $0$.
    We define the hyperplane $H_1$ by the equation $a_0 + a_1 x_1 + \dots + a_{n-2} x_{n-2} + a_n x_{n-1} = 0$.
    Analogously, we define $H_2$ using the coefficients of $Q_2(y)$.
    By construction, evaluating $H_1$ on point $\mathbf{p_i}$ perfectly mirrors evaluating $Q_1(v_i)$. 
    Since the roots of $Q_1$ are exactly the elements of $R_1$, $H_1$ evaluates to $0$ and safely covers $\mathbf{p_i}$ for all $v_i \in R_1$. 
    Analogously, $H_2$ covers $\mathbf{p_i}$ for all $v_i \in R_2$. 
    Thus, all $2n$ points of the form $\mathbf{p_i}$ are covered.
    The point set $P$ is entirely covered by $H_1$ and $H_2$.

    Suppose $P$ can be covered by exactly two hyperplanes (in $\mathbb{R}^{n-1}$), $H_1$ and $H_2$.
    Let the equation for $H_1$ be $a_0 + a_1 x_1 + \dots + a_{n-2} x_{n-2} + a_n x_{n-1} = 0$ and $H_2$ be $b_0 + b_1 x_1 + \dots + b_{n-2} x_{n-2} + b_n x_{n-1} = 0$.
    We deliberately use $a_n$ and $b_n$ for the sake of presentation.
    For any point $\mathbf{p_i}$ covered by $H_1$, we have $a_0 + a_1 v_i + a_2 v_i^2 + \dots + a_{n-2} v_i^{n-2} + a_n v_i^n = 0$. 
    Thus, $v_i$ is a root of the polynomial 
    \[
    Q_1(y) = a_0 + a_1 y + a_2 y^2 + \dots + a_{n-2} y^{n-2} + 0 \cdot y^{n-1} + a_n y^n.
    \]
    Since $H_1$ is a valid hyperplane, 
    $(a_0,\ldots,a_{n-2},a_n)$
    cannot be the zero vector; thus $Q_1(y)$ is not the zero polynomial. 
    Since $Q_1(y)$ has degree at most $n$, and the numbers $v_i$ are distinct, it can have at most $n$ distinct roots. 
    Consequently, $H_1$ can cover at most $n$ points of the form $\mathbf{p_i}$. 
    By symmetric arguments, $H_2$ can also cover at most $n$ points of the form $\mathbf{p_i}$.
    We define the polynomial $Q_2(y)$ analogously.
    Since there are $2n$ points of the form $\mathbf{p_i}$ in total, $H_1$ is forced to cover exactly $n$ such points (corresponding to subset $R_1$), and $H_2$ must cover exactly the remaining $n$ such points (subset $R_2$). 
    This guarantees that $a_n \neq 0$ and $b_n \neq 0$ (otherwise the degree drops and they could not cover $n$ points).
    The sum of the roots of $Q_1(y)$ is given by: $-a_{n-1}/a_n = 0$.
    Thus, we infer that $\sum_{v \in R_1} v = 0$.
    Consequently, the corresponding original elements $w_i = v_i + t/n$ sum to $\sum_{v_i \in R_1} (v_i + t/n) = 0 + n(t/n) = t$, and we obtain the partition that forms a valid solution to the \ecnpd instance.
\end{proof}

\section{\xp algorithm for \projcl}

In this section, we prove \Cref{theorem:algoforprojcl}. Its proof uses a fundamental result from algebraic geometry. We need a few definitions to state this result. 
%
%
Let $\bbR[X_1,X_2,\ldots,X_d]$ be the ring of polynomials in variables $X_1,\ldots,X_d$ with coefficients in $\bbR$. Let ${V=\{x \in \bbR^d\mid Q(x_1,\ldots, x_d)=0\}}$ denote an algebraic set in $\bbR^d$ defined by~$Q \in \bbR[X_1,X_2,\ldots,X_d]$. The sign condition for a set of polynomials is defined as follows. Let~$\calP=\{P_1,\ldots P_s\}\subseteq \bbR[X_1,X_2,\ldots,X_d]$ be a subset of $s$ polynomials. The sign condition for~$\calP$ is specified by a sign vector $\sigma \in \{-1,0,+1\}^s$ and the sign condition is non-empty over~$V$ with respect to $\calP$ if there is $x \in V$ such that~$
\sigma=(\sgn(P_1(x)),\ldots, \sgn(P_s(x))),
$
where $\sgn(y)$ is the sign function defined as 
\[
\sgn(y)=
    \begin{cases}
       1,\text{ if }y>0,\\
       0,\text{ if }y=0, \text{ and}\\
       -1,\text{ if }y<0,
    \end{cases}
\]
for $y \in \bbR$.
\begin{figure}[t!]
    \centering
    \includegraphics[width=0.9\linewidth]{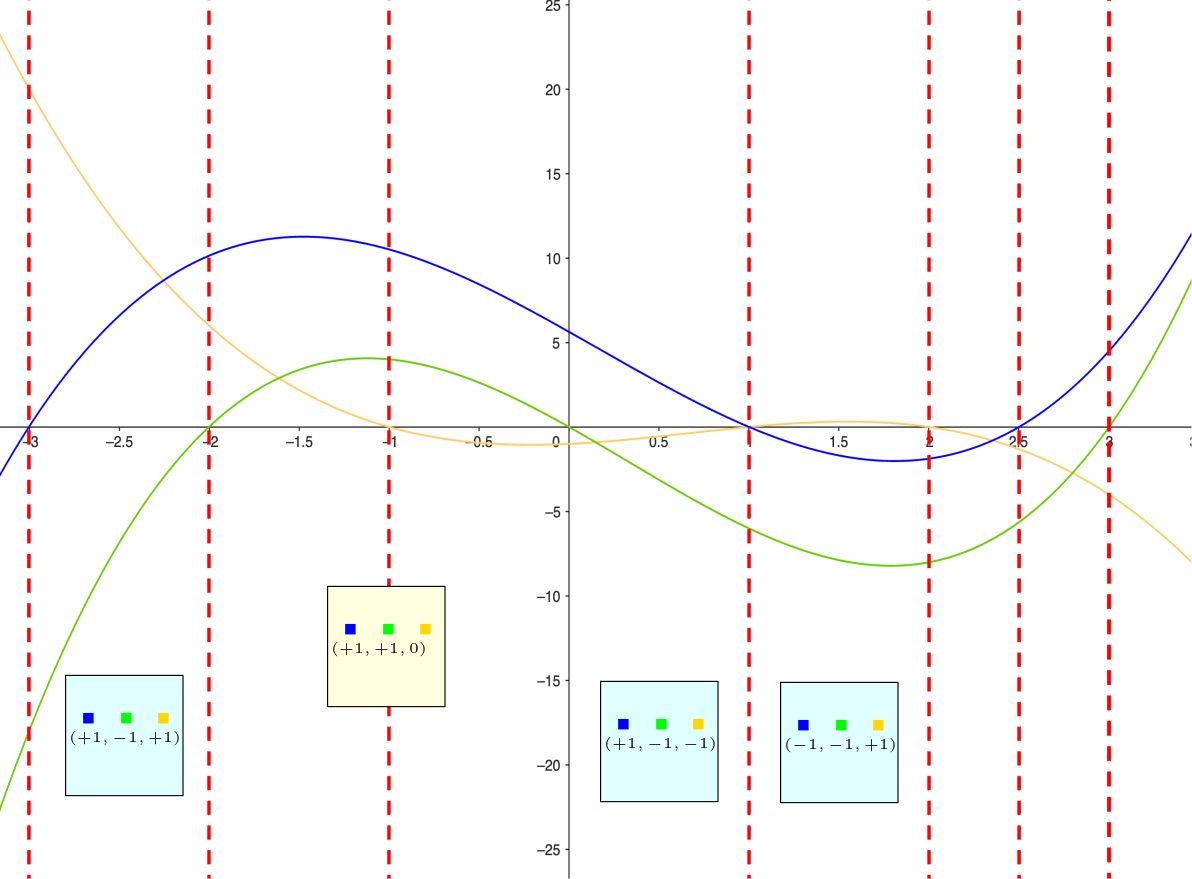}
    \caption{ A family of 3 polynomials on a single variable~$x$. The red dotted line 
     partitions the $x$-axis into cells. The sign condition of some cells are given. The yellow box shows the sign condition of the cell consisting only of the point~$-1$. The cells are $(-\infty, -3), \{-3\}, (-3, -2)$ and so on.}
    \label{fig:signcond}
\end{figure}
The \textit{realization space} of $\sigma \in \{-1,0,+1\}^s$ over $V$ is the set 
\[
R(\sigma)= \big{\{}x\mid x\in V, \sigma = (\sgn(P_1(x)),\ldots,\sgn(P_s(x))) \big{\}}.
\]
If $R(\sigma)$ is not empty, then each of its non-empty semi-algebraically connected (which is equivalent to just connected on semi-algebraic sets as proven in \cite[Theorem 5.22]{10.5555/1197095}) components is a \textit{cell} of $\calP$ over~$V$.
\Cref{fig:signcond} shows an example of a family of polynomials on a single variable partitioning the $x$-axis into cells.
For an algebraic set $W$, its real dimension is the maximal integer $d'$ such that there is a homeomorphism of $[0,1]^{d'}$ in $W$. Naturally if $W \subset \bbR^d$, then $d'<d$. The following theorem by Basu et al.~\cite{10.5555/1197095} gives an algorithm to compute a point in each cell of~$\calP$ over~$V$.

\begin{proposition}[{\cite[Theorem 13.22]{10.5555/1197095}}]\label{proposition:alggeom} Let $V$ be an algebraic set in $\bbR^d$ of real dimension $d'$ defined by $Q(X_1,\ldots, X_d)=0$, where 
$Q$ is a polynomial in $\bbR[X_1,\ldots, X_d]$ of degree at most~$b$, and let $\calP \subset \bbR[X_1,X_2,\ldots,X_d]$ be a finite set of $s$ polynomials with each $P\in \calP$ also of degree at most $b$. Let $D$ be a ring generated by the coefficients of $Q$ and the polynomials in $\calP$. There is an algorithm that takes as input $Q,d'$ and $\calP$ and computes a set of points meeting every non-empty cell of $V$ over $\mathcal{P}$. The algorithm uses at most $s^{d'}b^{\mathcal{O}(d)}$ arithmetic operations in $D$. 
\end{proposition}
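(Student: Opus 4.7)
My plan is to prove this via the classical \emph{critical point method} for sampling points in semi-algebraic sets: rather than exploring the arrangement of $\calP$ on $V$ directly, I would generate candidate points by intersecting $V$ with every small subfamily of $\calP$ and then locating the critical points of a generic squared-distance function restricted to the resulting lower-dimensional algebraic set. The union of these critical points over all subfamilies will be the desired sample set, and genericity plus a B\'ezout-style count will yield the stated complexity.

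\textbf{Main construction.} If $V$ is unbounded I would first extend scalars to the Puiseux series ring $\bbR\langle\varepsilon\rangle$ and replace $V$ by its intersection $V_\varepsilon$ with the closed ball $\{\|X\|^2 \le 1/\varepsilon^2\}$, which is closed and bounded over $\bbR\langle\varepsilon\rangle$; finite sample points are specialized back to $\bbR$ at the end. Next, for every subset $I \subseteq \{1,\ldots,s\}$ with $|I|=j\le d'$, I form the algebraic set
\[
V_I \;=\; V \cap \bigcap_{i \in I}\{P_i = 0\},
\]
cut out by the single polynomial $Q^2 + \sum_{i \in I} P_i^2$ of degree $O(b)$. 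I then pick a generic rational $\bfa \in \bbQ^d$, set $F(\bfx) = \sum_{i=1}^d (x_i - a_i)^2$, and compute the critical points of $F|_{V_I}$ via the standard Lagrange-multiplier/Jacobian encoding: a polynomial system in $d$ variables whose degrees are $O(b)$ and whose solution set is finite by genericity of $\bfa$. Let $\mathcal{N}_I$ be that finite set; the algorithm outputs $\bigcup_I \mathcal{N}_I$.

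\textbf{Correctness and complexity.} The key correctness lemma is that every non-empty cell $C$ of $\calP$ on $V$ meets some $\mathcal{N}_I$. Let $C$ be a connected component of $R(\sigma)$ for some sign vector $\sigma$. The function $F$ attains its minimum on $\overline{C} \subseteq V_\varepsilon$ at a point $\bfp$; the subset $I \subseteq [s]$ of indices with $P_i(\bfp)=0$ together with $Q(\bfp)=0$ cuts out a stratum containing $\bfp$, and the Lagrange condition $\nabla F(\bfp) \in \mathrm{span}\{\nabla Q(\bfp), \nabla P_i(\bfp): i\in I\}$ certifies $\bfp$ as a critical point of $F|_{V_I}$. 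Genericity of $\bfa$ then forces this critical point to be isolated, hence to lie in $\mathcal{N}_I$. For the complexity bound, each critical-point system has degree $O(b)$ in $d$ variables, so by B\'ezout its isolated solutions number at most $b^{O(d)}$ and can be enumerated in $b^{O(d)}$ arithmetic operations over $D$ using a rational univariate representation. Summing over the $\sum_{j=0}^{d'}\binom{s}{j}\le s^{d'}$ subsets $I$ yields the stated $s^{d'} b^{O(d)}$ total.

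\textbf{Main obstacle.} The technical heart of the proof is the correctness lemma when $V$ is singular, when the relevant cell is open and unbounded, or when the minimum of $F$ on the closure of a cell is attained simultaneously on several strata—a na\"ive minimum-of-distance argument breaks in these cases. Handling them requires the full Puiseux-series apparatus of Basu--Pollack--Roy: one perturbs $Q$ and $\calP$ by two independent infinitesimals, runs the construction in the enlarged real closed field, and then shows by a careful limit argument that specialization $\varepsilon \to 0$ sends the computed sample set to one still meeting every cell of the original instance. Making that specialization rigorous—and verifying that it does not blow up the arithmetic-operation count beyond $s^{d'}b^{O(d)}$—is where the real work lies.
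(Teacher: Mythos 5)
This statement is not proved in the paper at all: it is imported verbatim as Proposition~\ref{proposition:alggeom} from Basu, Pollack, and Roy (Theorem 13.22 of the cited book), so there is no in-paper argument to compare your proposal against. What you have written is a reconstruction of the proof of the cited theorem, and it does follow the standard route used there: sample points are obtained by ranging over subsets $I$ of at most $d'$ polynomials, intersecting $V$ with $\{P_i=0\}_{i\in I}$, and harvesting the critical points of a generic squared-distance function, with Puiseux-series infinitesimals handling unboundedness and degeneracy, and a B\'ezout count giving $b^{\mathcal{O}(d)}$ per subset and $s^{d'}$ subsets overall.

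As a proof, though, your sketch leaves the two load-bearing steps unestablished, and you have correctly identified them yourself. First, the minimizer $\bfp$ of $F$ on $\overline{C}$ need not lie in the cell $C$ (it can sit on the boundary, in a cell with a different sign vector), so without the infinitesimal relaxation of the strict inequalities the output set is not guaranteed to meet $C$ at all. Second, nothing in your argument bounds $|I|$: if more than $d'$ of the $P_i$ vanish at $\bfp$, the relevant stratum is never enumerated, and it is precisely the general-position deformation of $Q$ and $\calP$ by independent infinitesimals that forces at most $d'$ of them to vanish simultaneously on the dimension-$d'$ set and thereby justifies the $s^{d'}$ factor. Since the entire content of the proposition lives in making that deformation-and-specialization argument rigorous, the proposal is best read as an accurate high-level summary of the Basu--Pollack--Roy proof rather than a self-contained proof; for the purposes of this paper, citing the source, as the authors do, is the appropriate treatment.
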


We next present our algorithm.
We start by analyzing the special case where~$k=1$.
Here and in the more general case where~$k>1$, we will distinguish between the cases~$r=0$ and~$r \geq 1$.
Recall that for~$r=0$, the problem is equivalent to the $k$-means problem.
For~$k=1$, the solution is the centroid of the given points which can be computed in $\mathcal{O}(nd)$ time. The centroid \( C \) of a set of points \( \{x_i\}_{i=1}^n \) in \( d \)-dimensional space is the mean of all the points. It is calculated using the formula~\(C = \frac{1}{n} \sum_{i=1}^n x_i.\)
For $r>0$, we have the following proposition.
 
\begin{proposition}\label{prop:basecase}
 For $k=1$, \projcl{} is solvable in time $\Oh(d n^2+d^3)$.
\end{proposition}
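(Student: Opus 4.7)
The plan is to reduce the $k=1$ case to principal component analysis. First I would show that an optimal $r$-dimensional affine subspace may be assumed to pass through the centroid $\bfc = \frac{1}{n}\sum_{i=1}^n \bfx_i$. Writing any such subspace as $A = \bfp + V$ for an $r$-dimensional linear subspace $V$, and letting $P$ denote the orthogonal projection matrix onto $V^\perp$, the objective $\sum_i \|P(\bfx_i - \bfp)\|^2$ expands as $\sum_i \bfx_i^T P \bfx_i - 2n\bfc^T P \bfp + n\bfp^T P \bfp$. For any fixed $V$, this is a convex quadratic in $\bfp$ minimized precisely when $P\bfp = P\bfc$, so $\bfp = \bfc$ is optimal irrespective of the choice of $V$.

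Next I would translate all points by $-\bfc$ and reduce to finding an $r$-dimensional \emph{linear} subspace $V$ maximizing $\sum_i \|(I-P)(\bfx_i - \bfc)\|^2$. Forming the centered data matrix $X \in \bbR^{n \times d}$ whose rows are $(\bfx_i - \bfc)^T$, the quantity to maximize equals $\mathrm{tr}(U^T X^T X U)$ taken over $d \times r$ matrices $U$ with orthonormal columns spanning $V$. By the Eckart--Young theorem, the maximum is attained when the columns of $U$ are the top $r$ eigenvectors of the positive semidefinite matrix $X^T X \in \bbR^{d \times d}$, equivalently the top $r$ right singular vectors of $X$. (The case $r=0$, treated just before the proposition, is the degenerate end point of this analysis.)

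The running-time analysis then proceeds as follows. Computing $\bfc$ and centering the points takes $\Oh(nd)$. To extract the top $r$ eigenvectors one may form $X^T X \in \bbR^{d \times d}$ in $\Oh(nd^2)$ time and diagonalize it in $\Oh(d^3)$; alternatively one may form the Gram matrix $XX^T \in \bbR^{n \times n}$ in $\Oh(dn^2)$ time, diagonalize it in $\Oh(n^3)$, and recover the right singular vectors of $X$ from the eigenvectors of $XX^T$ via a single multiplication by $X^T$ (followed by normalization by the square roots of the corresponding eigenvalues). Choosing the cheaper route depending on whether $n \geq d$ gives total cost $\Oh(nd^2 + d^3)$ in the first regime and $\Oh(dn^2)$ in the second; in both cases this is $\Oh(dn^2 + d^3)$, using $nd^2 \leq dn^2$ whenever $d \leq n$.

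The only genuinely non-routine step is the initial reduction to the centroid, which needs the short calculation sketched above; the remainder is classical linear algebra. The main obstacle is a purely bookkeeping one, namely selecting the appropriate eigendecomposition route based on the relation between $n$ and $d$ so that the advertised bound $\Oh(dn^2 + d^3)$ holds uniformly across both regimes.
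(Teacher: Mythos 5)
Your proposal is correct and follows essentially the same route as the paper: reduce to the centroid via the centering argument, then solve the resulting linear-subspace (PCA) problem by an eigen/singular value decomposition within the stated $\Oh(dn^2+d^3)$ bound. The only difference is that you spell out the convex-quadratic centroid calculation and the $n$ versus $d$ case distinction explicitly, where the paper cites these as standard facts.
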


The proof of \Cref{prop:basecase} follows from the well-known reduction that allows reducing the problem of finding the \emph{affine subspace} that minimizes the sum of squares of distances to a given point set to the problem of finding a \emph{subspace} with the same property for a similar point set.
Indeed, any~\( r \)-dimensional affine space that minimizes the sum of squared perpendicular distances to the data points must pass through the centroid of these points (see, e.g.,~\cite{Blum_Hopcroft_Kannan_2020}). Thus, by centering the data---i.e., by subtracting the centroid from each data point---we ensure that the best-fit affine subspace passes through the origin.
Consequently, the problem of finding the best-fit \( r \)-dimensional subspace is identical to the classic \emph{principal component analysis} (\pca). By the Eckart-Young theorem \cite{Eckart1936TheAO}, \pca can be solved efficiently via \emph{Singular Value Decomposition} (SVD), which can be implemented in~\(\Oh(d n^2 +d^3) \) time~\cite{trefethen1997numerical}.

Now, we will provide a brief outline of our algorithm for \projcl{} for~$k > 1$. Any optimal solution~$A_1,A_2,\ldots A_k$ will partition the set of points into $k$ clusters where each cluster corresponds to an affine subspace~$A_i$ which is the closest affine subspace for each point in that cluster. Thus, a trivial algorithm will be to guess the partition of the input points into~$k$~clusters and find the~$k$~affine subspaces by running the algorithm of \Cref{prop:basecase} for each cluster. Since the number of partitions of the $n$ points is at most $k^n$ we have reduced \projcl{} to $k^n$ instances of \pca. 
In order to reduce the search space significantly, we make use of~\Cref{proposition:alggeom}.

We parameterize each of the $k$ unknown affine subspaces of dimension $r$ by $r+1$ $d$-dimensional variables. For each input point $\bfp$  and two unknown affine subspaces~$A_i$ and~$A_j$, we can write as a polynomial inequality the condition that $\bfp$ is closer to~$A_i$ than to~$A_j$. With this idea, we encode the problem as the solution to systems of inequalities of~$nk^2$~polynomials of degree at most $4$ and real dimension $ dk(r+1)$. For affine subspaces corresponding to variables from the same cell of the corresponding algebraic set,  the set of closest points are the same.  This defines the partition of the points into clusters. For each such cluster we can find the best fit by making use of \Cref{prop:basecase}. Finally, 
by \Cref{proposition:alggeom}, the number of such partitions is $ n^{\mathcal{O}(dk(r+1))}$.

 
\xpalgo*
    

\begin{proof}
We distinguish between two cases: $r=0$ and~$r>0$.
In the first case, note that~$n^{\Oh(dk(r+1))} = n^{\Oh(dk)}$.
Moreover, the case is equivalent to $k$-\textsc{Means}.
We can therefore use the~$n^{\Oh(dk)}$-time algorithm of Inaba et al. \cite{inaba1994applications}. 

We next explain our algorithm for the case $r>0$.
We will map $k$-tuples of $r$-dimensional affine subspaces with points of a certain algebraic set. We can represent an $r$-dimensional affine subspace of $\mathbb{R}^d$ by a $d \times (r+1)$ matrix $\mathbf{V}$ where the first $r$ columns represent an orthonormal basis of a linear subspace of dimension $r$ and column~$(r+1)$ represents the offset point. 
Consider the matrix space $\mathbb{R}^{d \times k (r+1)}$ and a matrix~$\mathbf{V} \in \mathbb{R}^{d \times k(r+1)}$.
Let~$\mathbf{V}=\{v_{ij}\}$ for~$1\leq i \leq d$ and~$1\leq j \leq k(r+1)$. Consider the following polynomial conditions.
\begin{align*}
Q^{O}_{i,j,h}(\mathbf{V})&\coloneqq \sum_{\ell=1}^{d}v_{\ell i}v_{\ell j}=0 \text{ for each } h \in[0,k-1] \text{ and each } hr+1\leq i<j\leq (h+1)r\\
Q_i^{N}(\mathbf{V})&\coloneqq \left(\sum_{\ell=1}^{d}v_{\ell i}^2 \right)-1=0 \text{ for each }1 \leq i \leq kr
\end{align*}
The condition $Q^{O}_{i,j,h}(\mathbf{V})=0$ requires the columns $i,j$ of $\mathbf{V}$ where $hr+1\leq i<j\leq (h+1)r$ to be pairwise orthogonal and the condition $Q_i^{N}(\mathbf{V})=0$ requires the column $i$ to have length $1$.
We will next show how a matrix~$\mathbf{V}\in \mathbb{R}^{d \times k (r+1)}$ which satisfies all of the above conditions represents $k$-affine subspaces of dimension $r$. For $h \in [0,k-1]$ the columns $hr+1$ to $(h+1)r$ represent the orthonormal basis of the linear subspace corresponding to the $h$\textsuperscript{th} affine subspace of rank $r$ and column~$(kr+h+1)$ represents the corresponding offset point.
We combine all of the above equations into a single one by taking the sum of squares
\begin{align}
Q(\mathbf{V})= \mathop{\sum}_{\substack{ hr+1\leq i<j\leq (h+1)r \\  h \in[k] }}\left(Q^{O}_{i,j,h}(\mathbf{V})\right)^2+ \sum_{i=1}^{kr}\left(Q_i^{N}(\mathbf{V})\right)^2.\label{eq:0}
\end{align}


Note that $Q(\mathbf{V})=0$ if and only if~$Q^{O}_{i,j,h}(\mathbf{V}) = 0$ for all~$h,i,j$ and~$Q_i^{N}(\mathbf{V}) = 0$ for all~$i$.
Consider the algebraic set $W\subseteq \mathbb{R}^{d \times k(r+1)}$ of all matrices satisfying \cref{eq:0}.
We view each element of $W$ as a representation of $k$ affine subspaces of dimension $r$.
Let~$col_1,col_2,\ldots col_{d\times k(r+1)}$ be the columns of an element~$\mathbf{V} \in  W$. Then, the $h$\textsuperscript{th} affine subspace is represented by the linear subspace spanned by the columns $col_{hr+1}$ to $col_{(h+1)r+1}$ and the point corresponding to the column $col_{kr+h+1}$. Let $\mathbf{B_h}$ denote the submatrix formed by columns~$col_{hr+1}$ to $col_{(h+1)r+1}$.
The squared distance from a point~$\bfy \in \mathbb{R}^d$ to an $r$-dimensional affine subspace~$\bfp + \mathbf{B}x$ is given by $\norm{\bfy-\bfp-\mathbf{B}\mathbf{B}^T\bfy}_F$.

Now consider the family of polynomials $\calP=\{P_{i,j,f}\}_{1\leq i<j\leq k, 1\leq f\leq n}$ defined over $W$ by
\[
P_{i,j,f}(\mathbf{V})=\norm{\bfx_f-col_{kr+i+1}- \mathbf{B_i}\mathbf{B_i}^T\bfx_f}^2_F-\norm{\bfx_f-col_{kr+j+1}- \mathbf{B_j}\mathbf{B_j}^T\bfx_f}^2_F.
\]
The polynomial $P_{i,j,f}(\mathbf{V})>0$ if the $j$\textsuperscript{th} affine space represented by $\mathbf{V}$ is closer to $\bfx_f$ than the $i$\textsuperscript{th} affine space represented by $\mathbf{V}$ and vice-versa. For any particular $\mathbf{V} \in W$, the sign condition with respect to $\calP$ gives an ordering of the affine subspaces with respect to the Euclidean distance from each of the input points. Hence, we can partition the $n$ input points based on the nearest affine subspace.
Let $\calC$ be the partition of $W$ on cells over $\calP$. For each cell $C$, the sign condition with respect to $\calP$ is constant over $C$, which implies that the partitioning of the input points is the same over all points in $C$. Also, let $A_1,A_2,\ldots A_k$ be an optimal solution for \projcl{}. We can represent $A_i$ by $\mathbf{p_i}$ and~$\mathbf{B_i}$ where~$\mathbf{B_i} \in \mathbb{R}^{d\times r}$ and the columns of $\mathbf{B_i}$ are orthonormal. Here, $\mathbf{p_i}$ is the offset point and $\mathbf{B_i}$ is the $r$-dimensional linear subspace. Hence, there exists $\mathbf{V} \in W$ that represents~$A_1,A_2,\ldots A_k$. Thus if we run the classic \pca algorithm for each partition generated by every non-empty cell $C$, we will find an optimal solution. Thus our algorithm proceeds as follows:
\begin{itemize}
    \item Use the algorithm from~\Cref{proposition:alggeom} to obtain a point $\mathbf{V_C}$ from each cell $C$ of the algebraic set $W$ over the family of polynomials $\calP$.
    \item Run the \pca algorithm for each of the $k$ clusters generated by $\mathbf{V_C}$ to get $k$ affine subspaces.
    \item Return the $k$-affine subspaces that minimize the sum of squares of distances.
\end{itemize}
The degree of $Q$ and polynomials from $\calP$ is at most $4$, $|\calP|=nk^2$, and the real dimension of $W$ is at most $dk(r+1)$ which is the dimension of  $\mathbb{R}^{d \times k (r+1)} \supseteq W$. The algorithm from~\Cref{proposition:alggeom} does at most 
\(
t=(nk^2)^{dk(r+1)}2^{{\mathcal{O}(dk(r+1))}}
\)
operations and produces at most $t$ points. Our algorithm runs $k$ instances of \pca for each such point and hence we have reduced our problem to solving
$k(nk^2)^{dk(r+1)}2^{{\mathcal{O}(dk(r+1))}}=n^{\mathcal{O}(dk(r+1))}$ instances of \pca. Since \pca can be solved in polynomial time the overall running time of our algorithm is in~$n^{\mathcal{O}(dk(r+1))}$.

We can reduce the problem to solving $k(nk^2)^{dk(d-r+1)}2^{{\mathcal{O}(dk(dr+1))}}=n^{\mathcal{O}(dk(d-r+1))}$ instances of \pca using a slightly different characterization of the $k$ $r$-dimensional affine subspaces. 
Note that every $r$-dimensional affine subspace of a $d$-dimensional Euclidean space~$\mathbb{R}^d$  is defined by~$y=\bfp + \mathbf{V}x$, where $\bfp$ is a point in $\mathbb{R}^d$ and $\mathbf{V}\in \mathbb{R}^{d\times r}$ represents the orthonormal basis of an $r$-dimensional linear subspace of $\mathbb{R}^d$. 
 Let $\mathbf{V^C} \in \mathbb{R}^{d\times (d-r)}$ be the orthogonal complement of $\mathbf{V}$.
Thus for every $r$-dimensional affine subspace~$A$ defined by~$y=\bfp + \mathbf{V}x$, we have the corresponding unique affine subspace~$\comp(A)$ defined by~$y=\bfp + \mathbf{V^{C}}x$. 
The points in the algebraic set will now represent $\comp(A)$ instead of the affine subspace $A$.

To this end, we consider the matrix space $\mathbb{R}^{d \times k (d-r+1)}$ and a matrix~$\mathbf{V^{C}}\in \mathbb{R}^{d \times k(d-r+1)}$.
Let~$\mathbf{V^{C}}=\{v_{ij}\}_{i,j}$ and consider the following polynomial conditions:
\begin{align*}
\Tilde{Q}^{O}_{i,j,h}(\mathbf{V^{C}})\coloneqq \sum_{\ell=1}^{d}v_{\ell i}v_{\ell j}=0 \text{ for}&\text{ each } h \in[0,k-1] \text{ and}\\&\text{ each } h(d-r)+1\leq i<j\leq (h+1)(d-r)\\
\Tilde{Q}_i^{N}(\mathbf{V^{C}})\coloneqq \left(\sum_{\ell=1}^{d}v_{\ell i}^2 \right)-1=0 \text{ for}&\text{ each }1 \leq i \leq k(d-r)
\end{align*}
As above, we combine all equations into the following equivalent one:
\[
\Tilde{Q}(\mathbf{V^{C}})= \mathop{\sum}_{\substack{ h(d-r)+1\leq i<j\leq (h+1)(d-r) \\  h \in[k] }}\left(\Tilde{Q}^{O}_{i,j,h}(\mathbf{V^{C}})\right)^2+ \sum_{i=1}^{k(d-r)}\left(\Tilde{Q}_i^{N}(\mathbf{V^{C}})\right)^2.
\]
Again, let~$\Tilde{W}\subseteq \mathbb{R}^{d \times k(d-r+1)}$ be the algebraic set of all matrices satisfying the above equation.
We view each element of $\Tilde{W}$ as a representation of $k$ affine subspaces of dimension~${d-r}$. For every element $\mathbf{V^{C}} \in \Tilde{W}$ and for each~$h \in [0,k-1]$, the columns $h(d-r)+1$ to~$(h+1)(d-r)$ represents the basis of the linear subspace corresponding to the $h$\textsuperscript{th} affine subspace of rank~$d-r$ and column~$k(d-r)+h+1$ represents the corresponding offset point. Let~$col_1,col_2,\ldots col_{d\times k(d-r+1)}$ be the columns of an element~$\mathbf{V^{C}} \in \Tilde{W}$. The $h$\textsuperscript{th} affine subspace is represented by the linear subspace spanned by the columns $col_{h(d-r)+1}$ to~$col_{(h+1)(d-r)+1}$ and the point corresponding to column~$col_{k(d-r)+h+1}$. We again denote the corresponding submatrix by~$\mathbf{B_h}$.
The square of the distance from a point $\bfy \in \mathbb{R}^d$ to the orthogonal complement of an $(d-r)$-dimensional affine subspace defined by~$\bfp + \mathbf{V^{C}}x$ is given by~$\sum_{i=1}^{d-r}(v_i(\bfy-\bfp))^2=\norm{\mathbf{V^{C}}(\bfy-\bfp)}^2$. 
Now we define the family~$\Tilde{\calP}=\{\Tilde{P}_{i,j,f}\}_{1\leq i<j\leq k, 1\leq f\leq n}$ of polynomials over $W$ as 
\[
\Tilde{P}_{i,j,f}(\mathbf{V^{C}})=\norm{\mathbf{B_i}(\bfx_f-col_{k(d-r)+i+1})}^2_F-\norm{\mathbf{B_i}(\bfx_f-col_{k(d-r)+j+1})}^2_F.
\]
Note that~$\Tilde{P}_{i,j,f}(\mathbf{V^{C}})>0$ corresponds to the case that the $j$\textsuperscript{th} $r$-dimensional affine subspace represented by $(\mathbf{V^{C})^C} = \mathbf{V}$ is closer to $\bfx_f$ than the $i$\textsuperscript{th} $r$-dimensional affine subspace represented by $\mathbf{V}$.
By the same argument as above, we can iterate over all non-empty cells of sign conditions and for each cell, we compute a solution by solving~$k$ instances of \pca. 
The degree of $\Tilde{Q}$ and polynomials from $\Tilde{\calP}$ is at most $4$, $|\Tilde{\calP}|=nk^2$ and the real dimension of $W$ is at most~$dk(d-r+1)$ which is the dimension of  $\mathbb{R}^{d \times k (d-r+1)} \supseteq W$. The algorithm from~\Cref{proposition:alggeom} does at most~\(t=(nk^2)^{dk(d-r+1)}2^{{\mathcal{O}(dk(r+1))}}\)~operations and produces at most $t$ points. Our algorithm runs $k$ instances of \pca for each such point and hence we have reduced our problem to~\(k(nk^2)^{dk(d-r+1)}2^{{\Oh(dk(r+1))}}=n^{\Oh(dk(d-r+1))}\)~instances of \pca. The overall running time in this case is $n^{\mathcal{O}(dk(d-r+1))}$. As we can choose the most efficient way to represent the affine subspace the running time of our algorithm is in~$n^{\mathcal{O}(min\{dk(r+1),dk(d-r+1)\})}$. 
\end{proof}
\section{Conclusion}\label{sec:conclusion}
In this work, we showed that \problemLineCl{} is \classW1-hard when parameterized by~$k$ and that \projcl{} is in \classXP when parameterized by both~$k$ and~$d$.
Complementing this result, we also show that the special case \problemHLC{} is \classNP-hard even when~$k=2$ (and it is known to be \classNP-hard for~$d=2$).

We conclude with several open questions. 
Our first question concerns the \classFPT-approxima\-bility of \projcl. To the best of our knowledge, the existence of approximation algorithms with approximation ratio $f(d,k)$ and running time $g(d,k)\cdot n^{\Oh(1)}$ is open for any computable functions~$f$ and~$g$ only depending on~$k$ and~$d$.
The parameterization by~$k$ and~$d$ also remains open.
For the case when $r$ and $k$ are constants, 
Deshpande et al.~\cite{DBLP:journals/toc/DeshpandeRVW06} gave a polynomial-time approximation scheme ({\sf PTAS}) for \projcl.
\iflong
A rich body of literature exists on constructing coresets for specific cases of \projcl{} (\emph{e.g.},~\cite{DBLP:conf/stoc/FeldmanL11,DBLP:conf/fsttcs/VaradarajanX12}). For instance, when the input points have integer coordinates, small coresets can be constructed~\cite{DBLP:conf/fsttcs/VaradarajanX12}, enabling randomized approximation algorithms that run in $f(k)\cdot n^{\mathcal{O}(1)}$ time. 
By combining our \classXP algorithm with such coreset constructions, efficient approximation algorithms can be developed for these cases. 
However, it is known that no \(o(n)\)-sized coreset exists for \projcl{}, even in the special case where \(r = k = 2\) and \({d = 3}\)~\cite{EV05,DBLP:conf/fsttcs/VaradarajanX12}. 
\fi

Another interesting open question concerns the existence of an exact  \classXP algorithm for \projcl{} parameterized by~$d$ and~$k$ for $\ell_p$-norms with~$p \neq 2$. 
Already the base case with a single subspace differs with the case of $p=2$. Clarkson and Woodruff~\cite{ClarksonW15} show that for every $p \in [1, 2)$, the problem of finding an $r$-dimensional subspace $F$ that minimizes (or even $(1 + 1/d^{\Oh(1)})$-approximates) $\sum_{i=1}^n \text{dist}(\bfx_i, F)^p$ is \classNP-hard.
For~$p > 2$, Deshpande et al.~\cite{DBLP:conf/soda/DeshpandeTV11} show \classNP-hardness and UGC-hardness.

 
%



	\bibliographystyle{plainnat}

	\bibliography{book_pc}

@string{jcss = {J. Computer and System Sciences}}

@string{lipics = {Leibniz International Proceedings in Informatics (LIPIcs)}}

@string{soda14 = {Proceedings of the 24th Annual ACM-SIAM Symposium on Discrete Algorithms (SODA)}}

@string{stoc04 = {Proceedings of the 36th Annual ACM Symposium on Theory of Computing (STOC)}}

@string{stoc11 = {Proceedings of the 43rd Annual ACM Symposium on Theory of Computing (STOC)}}

@article{wang2010parameterized,
	author = {Wang, J. and Li, W. and Chen, J.},
	date-added = {2025-04-07 13:32:44 +0200},
	date-modified = {2025-04-07 13:32:44 +0200},
	journal = {Theoretical Computer Science},
	number = {44--46},
	pages = {4005--4009},
	publisher = {Elsevier},
	title = {A parameterized algorithm for the hyperplane-cover problem},
	volume = {411},
	year = {2010},
	bdsk-url-1 = {https://doi.org/10.1016/j.tcs.2010.08.012}}

@inproceedings{AfshaniBDN16,
	author = {Peyman Afshani and Edvin Berglin and Ingo van Duijn and Jesper Sindahl Nielsen},
	bibsource = {dblp computer science bibliography, https://dblp.org},
	biburl = {https://dblp.org/rec/conf/compgeom/AfshaniBDN16.bib},
	booktitle = {32nd International Symposium on Computational Geometry (SoCG)},
	date-added = {2025-04-07 13:21:47 +0200},
	date-modified = {2025-04-07 13:22:13 +0200},
	pages = {60:1--60:15},
	publisher = {Schloss Dagstuhl - Leibniz-Zentrum f{\"{u}}r Informatik},
	timestamp = {Wed, 16 Jun 2021 12:03:34 +0200},
	title = {Applications of Incidence Bounds in Point Covering Problems},
	year = {2016},
	bdsk-url-1 = {https://doi.org/10.4230/LIPIcs.SoCG.2016.60}}

@inproceedings{broden2001guarding,
	author = {Brod{\'e}n, Bj{\"o}rn and Hammar, Mikael and Nilsson, Bengt J.},
	booktitle = {Proceedings of the 13th Canadian Conference on Computational Geometry (CCCG)},
	date-added = {2025-04-05 12:37:25 +0200},
	date-modified = {2025-04-05 12:37:32 +0200},
	pages = {45--48},
	title = {Guarding lines and 2-link polygons is {APX}-hard},
	year = {2001}}

@inproceedings{kumar2000hardness,
	author = {Anil Kumar, V. S. and Arya, Sunil and Ramesh, H.},
	booktitle = {27th International Colloquium on Automata, Languages and Programming (ICALP)},
	date-added = {2025-04-05 12:35:10 +0200},
	date-modified = {2025-04-05 12:35:35 +0200},
	pages = {624--635},
	publisher = {Springer},
	title = {Hardness of set cover with intersection 1},
	year = {2000}}

@inproceedings{CharikaW25,
	author = {Moses Charikar and Erik Waingarten},
	bibsource = {dblp computer science bibliography, https://dblp.org},
	biburl = {https://dblp.org/rec/conf/soda/CharikaW25.bib},
	booktitle = {Proceedings of the 2025 Annual {ACM-SIAM} Symposium on Discrete Algorithms (SODA)},
	date-added = {2025-04-05 11:44:06 +0200},
	date-modified = {2025-04-05 11:44:23 +0200},
	pages = {3172--3209},
	publisher = {{SIAM}},
	timestamp = {Wed, 29 Jan 2025 09:55:26 +0100},
	title = {The Johnson-Lindenstrauss Lemma for Clustering and Subspace Approximation: From Coresets to Dimension Reduction},
	year = {2025},
	bdsk-url-1 = {https://doi.org/10.1137/1.9781611978322.102}}

@inproceedings{BoissonnatDGK18,
	author = {Jean{-}Daniel Boissonnat and Kunal Dutta and Arijit Ghosh and Sudeshna Kolay},
	bibsource = {dblp computer science bibliography, https://dblp.org},
	biburl = {https://dblp.org/rec/conf/latin/BoissonnatDGK18.bib},
	booktitle = {Proceedings of the 13th Latin American Symposium on Theoretical Informatics (LATIN)},
	pages = {187--200},
	publisher = {Springer},
	title = {Tight Kernels for Covering and Hitting: Point Hyperplane Cover and Polynomial Point Hitting Set},
	year = {2018}
}

@inproceedings{AggarwalPWYP99,
	author = {Charu C. Aggarwal and Cecilia Magdalena Procopiuc and Joel L. Wolf and Philip S. Yu and Jong Soo Park},
	bibsource = {dblp computer science bibliography, https://dblp.org},
	biburl = {https://dblp.org/rec/conf/sigmod/AggarwalPWYP99.bib},
	booktitle = {Proceedings {ACM} {SIGMOD} International Conference on Management of Data},
	date-added = {2025-04-05 11:36:49 +0200},
	date-modified = {2025-04-05 11:39:27 +0200},
	pages = {61--72},
	publisher = {{ACM} Press},
	timestamp = {Tue, 01 Jun 2021 15:22:37 +0200},
	title = {Fast Algorithms for Projected Clustering},
	year = {1999},
	bdsk-url-1 = {https://doi.org/10.1145/304182.304188}}

@inproceedings{ClarksonW15,
	author = {Kenneth L. Clarkson and David P. Woodruff},
	bibsource = {dblp computer science bibliography, https://dblp.org},
	biburl = {https://dblp.org/rec/conf/focs/ClarksonW15.bib},
	booktitle = {{IEEE} 56th Annual Symposium on Foundations of Computer Science (FOCS)},
	date-added = {2025-04-05 10:35:45 +0200},
	date-modified = {2025-04-05 10:36:16 +0200},
	pages = {310--329},
	publisher = {{IEEE} Computer Society},
	timestamp = {Thu, 23 Mar 2023 23:57:54 +0100},
	title = {Input Sparsity and Hardness for Robust Subspace Approximation},
	year = {2015},
	bdsk-url-1 = {https://doi.org/10.1109/FOCS.2015.27}}

@inproceedings{cohen-addad2018bane,
	author = {Cohen-Addad, Vincent and De Mesmay, Arnaud and Rotenberg, Eva and Roytman, Amir},
	booktitle = {Proceedings of the Twenty-Ninth Annual ACM-SIAM Symposium on Discrete Algorithms (SODA)},
	date-added = {2024-11-23 14:44:11 +0100},
	date-modified = {2024-11-23 14:44:18 +0100},
	pages = {441--456},
	publisher = {SIAM},
	title = {The bane of low-dimensionality clustering},
	year = {2018}}

@incollection{korneenko1993hyperplane,
	author = {Korneenko, Nikolai M. and Martini, Horst},
	booktitle = {New trends in discrete and computational geometry},
	date-added = {2024-11-16 18:17:00 +0100},
	date-modified = {2024-11-16 18:17:00 +0100},
	pages = {135--161},
	publisher = {Springer},
	title = {Hyperplane approximation and related topics},
	year = {1993}}

@article{Megiddo1982,
	author = {Megiddo, Nimrod and Tamir, Arie},
	journal = {Operations Research Letters},
	number = {5},
	pages = {194--197},
	publisher = {Elsevier},
	title = {On the complexity of locating linear facilities in the plane},
	volume = {1},
	year = {1982}}

@book{Blum_Hopcroft_Kannan_2020,
	author = {Blum, Avrim and Hopcroft, John and Kannan, Ravindran},
	place = {Cambridge},
	publisher = {Cambridge University Press},
	title = {Foundations of Data Science},
	year = {2020}}

@article{Eckart1936TheAO,
	author = {Carl Eckart and G. Marion Young},
	journal = {Psychometrika},
	pages = {211-218},
	title = {The approximation of one matrix by another of lower rank},
	volume = {1},
	year = {1936}}

@book{10.5555/1197095,
	author = {Basu, Saugata and Pollack, Richard and Roy, Marie-Fran\c{c}oise},
	isbn = {3540330984},
	publisher = {Springer-Verlag},
	title = {Algorithms in Real Algebraic Geometry},
	year = {2006}}

@inproceedings{vidal2003algebraic,
	author = {Vidal, R. and Soatto, S. and Ma, Y. and Sastry, S.},
	booktitle = {Proceedings of the Conference on Decision and Control (CDC)},
	date-added = {2024-09-15 11:49:27 +0200},
	date-modified = {2024-09-15 11:49:27 +0200},
	pages = {167--172},
	publisher = {IEEE},
	title = {An algebraic geometric approach to the identification of a class of linear hybrid systems},
	year = {2003}}

@article{hong2006multi,
	author = {Hong, W. and Wright, J. and Huang, K. and Ma, Y.},
	date-added = {2024-09-15 11:49:11 +0200},
	date-modified = {2024-09-15 11:49:11 +0200},
	journal = {IEEE Transactions on Image Processing},
	number = {12},
	pages = {3655--3671},
	publisher = {IEEE},
	title = {Multi-scale hybrid linear models for lossy image representation},
	volume = {15},
	year = {2006}}

@inproceedings{ho2003clustering,
	author = {Ho, J. and Yang, M. H. and Lim, J. and Lee, K. C. and Kriegman, D.},
	booktitle = {Proceedings of the IEEE Conference on Computer Vision and Pattern Recognition (CVPR)},
	date-added = {2024-09-15 11:48:09 +0200},
	date-modified = {2024-09-15 11:48:09 +0200},
	pages = {11--18},
	publisher = {IEEE},
	title = {Clustering appearances of objects under varying illumination conditions},
	year = {2003}}

@article{vidal2008multiframe,
	author = {Vidal, R. and Tron, R. and Hartley, R.},
	date-added = {2024-09-15 11:46:46 +0200},
	date-modified = {2024-09-15 11:46:46 +0200},
	journal = {International Journal of Computer Vision},
	number = {1},
	pages = {85--105},
	publisher = {Springer},
	title = {Multiframe motion segmentation with missing data using power factorization and GPCA},
	volume = {79},
	year = {2008}}

@article{yang2008unsupervised,
	author = {Yang, A. and Wright, J. and Ma, Y. and Sastry, S.},
	journal = {Computer Vision and Image Understanding},
	number = {2},
	pages = {212--225},
	publisher = {Elsevier},
	title = {Unsupervised segmentation of natural images via lossy data compression},
	volume = {110},
	year = {2008}}

@article{parsons2004subspace,
	author = {Parsons, Lance and Haque, Ehtesham and Liu, Huan},
	date-modified = {2025-04-05 11:49:44 +0200},
	journal = {ACM SIGKDD explorations newsletter},
	number = {1},
	pages = {90--105},
	publisher = {ACM New York, NY, USA},
	title = {Subspace clustering for high dimensional data: a review},
	volume = {6},
	year = {2004}}

@article{vidal2011subspace,
	author = {Vidal, Ren{\'e}},
	date-added = {2024-09-15 11:36:50 +0200},
	date-modified = {2024-09-15 11:36:50 +0200},
	journal = {IEEE Signal Processing Magazine},
	number = {2},
	pages = {52--68},
	publisher = {IEEE},
	title = {Subspace clustering},
	volume = {28},
	year = {2011}}

@article{elhamifar2013sparse,
	author = {Elhamifar, Ehsan and Vidal, Ren{\'e}},
	date-added = {2024-09-15 11:33:11 +0200},
	date-modified = {2024-09-15 11:33:11 +0200},
	journal = {IEEE transactions on pattern analysis and machine intelligence},
	number = {11},
	pages = {2765--2781},
	publisher = {IEEE},
	title = {Sparse subspace clustering: Algorithm, theory, and applications},
	volume = {35},
	year = {2013}}

@article{Lloyd82,
	author = {Stuart P. Lloyd},
	bibsource = {dblp computer science bibliography, https://dblp.org},
	journal = {{IEEE} Trans. Inf. Theory},
	number = {2},
	pages = {129--136},
	timestamp = {Tue, 10 Mar 2020 10:48:58 +0100},
	title = {Least squares quantization in {PCM}},
	volume = {28},
	year = {1982}}

@inproceedings{EV05,
	author = {Edwards, Michael and Varadarajan, Kasturi},
	booktitle = {Proceedings of the 25th International Conference on Foundations of Software Technology and Theoretical Computer Science (FSTTCS)},
	date-added = {2024-09-15 11:14:07 +0200},
	date-modified = {2024-09-15 11:14:26 +0200},
	pages = {456--467},
	publisher = {Springer},
	title = {No Coreset, No Cry: {II}},
	year = {2005}}

@book{fomin2019kernelization,
	author = {Fomin, Fedor V. and Lokshtanov, Daniel and Saurabh, Saket and Zehavi, Meirav},
	date-added = {2024-08-19 15:22:09 +0200},
	date-modified = {2024-08-19 15:22:09 +0200},
	publisher = {Cambridge University Press},
	title = {Kernelization: theory of parameterized preprocessing},
	year = {2019}}

@inproceedings{DBLP:conf/stoc/FeldmanL11,
	author = {Dan Feldman and Michael Langberg},
	booktitle = stoc11,
	pages = {569--578},
	publisher = {{ACM}},
	title = {A unified framework for approximating and clustering data},
	year = {2011}}

@article{DBLP:journals/jacm/AgarwalHV04,
	author = {Pankaj K. Agarwal and Sariel Har{-}Peled and Kasturi R. Varadarajan},
	bibsource = {dblp computer science bibliography, http://dblp.org},
	journal = {J. {ACM}},
	number = {4},
	pages = {606--635},
	timestamp = {Thu, 09 Feb 2006 13:34:10 +0100},
	title = {Approximating extent measures of points},
	volume = {51},
	year = {2004}}

@article{langerman2005covering,
	author = {Langerman, Stefan and Morin, Pat},
	date-added = {2024-07-20 13:38:49 +0200},
	date-modified = {2024-07-20 13:38:49 +0200},
	journal = {Discrete \& Computational Geometry},
	pages = {717--729},
	publisher = {Springer},
	title = {Covering things with things},
	volume = {33},
	year = {2005}}

@inproceedings{har2004coresets,
	author = {Har-Peled, Sariel and Mazumdar, Soham},
	booktitle = stoc04,
	pages = {291--300},
	publisher = {ACM},
	title = {On coresets for $k$-means and $k$-median clustering},
	year = {2004}}

@book{trefethen1997numerical,
	author = {Trefethen, Lloyd N. and Bau III, David},
	isbn = {978-0-89871-361-9},
	publisher = {Society for Industrial and Applied Mathematics},
	title = {Numerical Linear Algebra},
	year = {1997}}

@inproceedings{inaba1994applications,
	author = {Inaba, Mary and Katoh, Naoki and Imai, Hiroshi},
	booktitle = {Proceedings of the 10th annual Symposium on Computational Geometry (SoCG)},
	date-added = {2018-01-08 08:08:00 +0000},
	date-modified = {2019-02-18 12:07:03 +0000},
	organization = {ACM},
	pages = {332--339},
	title = {Applications of weighted {V}oronoi diagrams and randomization to variance-based k-clustering},
	year = {1994}}

@book{cygan2015parameterized,
	author = {Cygan, Marek and Fomin, Fedor V. and Kowalik, Lukasz and Lokshtanov, Daniel and Marx, D{\'a}niel and Pilipczuk, Marcin and Pilipczuk, Micha{\l} and Saurabh, Saket},
	publisher = {Springer},
	title = {Parameterized Algorithms},
	year = 2015}

@book{DowneyF99,
	address = {New York},
	author = {Downey, Rodney G. and Fellows, Michael R.},
	publisher = {Springer-Verlag},
	title = {Parameterized complexity},
	year = {1999}}

@book{GareyJ79,
	address = {New York},
	author = {Michael R. Garey and David S. Johnson},
	publisher = {W.H. Freeman and Company},
	title = {Computers and Intractability, A Guide to the Theory of {NP}-Completeness},
	year = 1979}

@article{ImpagliazzoPZ01,
	author = {Impagliazzo, Russell and Paturi, Ramamohan and Zane, Francis},
	journal = jcss,
	number = {4},
	pages = {512--530},
	title = {Which problems have strongly exponential complexity},
	volume = {63},
	xjournal = jcss,
	year = {2001}}

@book{Niedermeierbook06,
	author = {Niedermeier, Rolf},
	isbn = {978-0-19-856607-6; 0-19-856607-7},
	pages = {xii+300},
	publisher = {Oxford University Press},
	title = {Invitation to fixed-parameter algorithms},
	year = {2006}}

@article{Chen20061346,
	author = {Jianer Chen and Xiuzhen Huang and Iyad A. Kanj and Ge Xia},
	journal = {Journal of Computer and System Sciences},
	keywords = {Polynomial time approximation scheme},
	number = {8},
	pages = {1346 - 1367},
	title = {Strong computational lower bounds via parameterized complexity},
	volume = {72},
	year = {2006},
	bdsk-url-1 = {http://www.sciencedirect.com/science/article/pii/S0022000006000675},
	bdsk-url-2 = {http://dx.doi.org/10.1016/j.jcss.2006.04.007}}

@inproceedings{KratschPR14,
	author = {Stefan Kratsch and Geevarghese Philip and Saurabh Ray},
	booktitle = soda14,
	pages = {1596-1606},
	publisher = {SIAM},
	title = {Point Line Cover: The Easy Kernel is Essentially Tight},
	year = {2014}}

@incollection{DBLP:reference/ml/Procopiuc10,
	author = {Cecilia M. Procopiuc},
	bibsource = {dblp computer science bibliography, https://dblp.org},
	booktitle = {Encyclopedia of Machine Learning},
	pages = {806--811},
	publisher = {Springer},
	timestamp = {Wed, 14 Nov 2018 10:51:34 +0100},
	title = {Projective Clustering},
	year = {2010}}

@article{article,
	author = {Aggarwal, Charu and Yu, Philip},
	journal = {SIGMOD Record (ACM Special Interest Group on Management of Data)},
	month = {02},
	title = {Finding Generalized Projected Clusters in High Dimensional Spaces},
	volume = {29},
	year = {2002}}

@inproceedings{DBLP:conf/sigmod/ProcopiucJAM02,
	author = {Cecilia Magdalena Procopiuc and Michael Jones and Pankaj K. Agarwal and T. M. Murali},
	bibsource = {dblp computer science bibliography, https://dblp.org},
	booktitle = {Proceedings of the 2002 {ACM} {SIGMOD} International Conference on Management of Data},
	pages = {418--427},
	publisher = {{ACM}},
	timestamp = {Thu, 11 Mar 2021 15:20:15 +0100},
	title = {A Monte Carlo algorithm for fast projective clustering},
	year = {2002}}

@inproceedings{DBLP:conf/vldb/ChakrabartiM00,
	author = {Kaushik Chakrabarti and Sharad Mehrotra},
	bibsource = {dblp computer science bibliography, https://dblp.org},
	booktitle = {{VLDB} 2000, Proceedings of 26th International Conference on Very Large Data Bases},
	pages = {89--100},
	publisher = {Morgan Kaufmann},
	title = {Local Dimensionality Reduction: {A} New Approach to Indexing High Dimensional Spaces},
	year = {2000}}

@inproceedings{DBLP:conf/stoc/HuangV20,
	author = {Lingxiao Huang and Nisheeth K. Vishnoi},
	bibsource = {dblp computer science bibliography, https://dblp.org},
	booktitle = {Proceedings of the 52nd Annual {ACM} {SIGACT} Symposium on Theory of Computing, {STOC}},
	pages = {1416--1429},
	publisher = {{ACM}},
	timestamp = {Mon, 18 Dec 2023 07:33:37 +0100},
	title = {Coresets for clustering in Euclidean spaces: importance sampling is nearly optimal},
	year = {2020}}

@article{DBLP:journals/toc/DeshpandeRVW06,
	author = {Amit Deshpande and Luis Rademacher and Santosh S. Vempala and Grant Wang},
	bibsource = {dblp computer science bibliography, https://dblp.org},
	journal = {Theory Comput.},
	number = {12},
	pages = {225--247},
	title = {Matrix Approximation and Projective Clustering via Volume Sampling},
	volume = {2},
	year = {2006}}

@inproceedings{DBLP:conf/stoc/DeshpandeV07,
	author = {Amit Deshpande and Kasturi R. Varadarajan},
	bibsource = {dblp computer science bibliography, https://dblp.org},
	booktitle = {Proceedings of the 39th Annual {ACM} Symposium on Theory of Computing},
	pages = {641--650},
	publisher = {{ACM}},
	title = {Sampling-based dimension reduction for subspace approximation},
	year = {2007}}

@inproceedings{DBLP:conf/soda/DeshpandeTV11,
	author = {Amit Deshpande and Madhur Tulsiani and Nisheeth K. Vishnoi},
	bibsource = {dblp computer science bibliography, https://dblp.org},
	booktitle = {Proceedings of the Twenty-Second Annual {ACM-SIAM} Symposium on Discrete Algorithms (SODA)},
	date-modified = {2025-04-05 10:39:39 +0200},
	pages = {482--496},
	publisher = {{SIAM}},
	title = {Algorithms and Hardness for Subspace Approximation},
	year = {2011}}

@article{10460289,
	abstract = { Backdoor attacks in deep neural networks },
	author = {Wang, Yue and Li, Wenqing and Sarkar, Esha and Shafique, Muhammad and Maniatakos, Michail and Jabari, Saif Eddin},
	issn = {2691-4581},
	journal = {IEEE Transactions on Artificial Intelligence},
	keywords = {Measurement;Artificial neural networks;Training;Vectors;Support vector machines;Optimization;Neurons},
	number = {07},
	pages = {3497-3509},
	publisher = {IEEE Computer Society},
	title = {{ A Subspace Projective Clustering Approach for Backdoor Attack Detection and Mitigation in Deep Neural Networks }},
	volume = {5},
	year = {2024}}

@article{DBLP:journals/siamcomp/Har-PeledW04,
	author = {Sariel Har{-}Peled and Yusu Wang},
	bibsource = {dblp computer science bibliography, https://dblp.org},
	journal = {{SIAM} J. Comput.},
	number = {2},
	pages = {269--285},
	timestamp = {Tue, 07 May 2024 20:21:44 +0200},
	title = {Shape Fitting with Outliers},
	volume = {33},
	year = {2004}}

@inproceedings{DBLP:conf/fsttcs/VaradarajanX12,
	author = {Kasturi R. Varadarajan and Xin Xiao},
	bibsource = {dblp computer science bibliography, https://dblp.org},
	biburl = {https://dblp.org/rec/conf/fsttcs/VaradarajanX12.bib},
	booktitle = {{IARCS} Annual Conference on Foundations of Software Technology and Theoretical Computer Science, {FSTTCS}},
	pages = {486--497},
	publisher = {Schloss Dagstuhl - Leibniz-Zentrum f{\"{u}}r Informatik},
	series = {LIPIcs},
	timestamp = {Wed, 21 Aug 2024 22:46:00 +0200},
	title = {On the Sensitivity of Shape Fitting Problems},
	year = {2012}}

@article{DBLP:journals/dcg/ShyamalkumarV12,
	author = {Nariankadu D. Shyamalkumar and Kasturi R. Varadarajan},
	bibsource = {dblp computer science bibliography, https://dblp.org},
	journal = {Discret. Comput. Geom.},
	number = {1},
	pages = {44--63},
	timestamp = {Thu, 12 Mar 2020 17:20:53 +0100},
	title = {Efficient Subspace Approximation Algorithms},
	volume = {47},
	year = {2012}}

@article{BOBROWSKI2012108,
	abstract = {Abstract
Clustering algorithms are the basic tools for solving pattern recognition or data mining problems. The most popular iterative clustering algorithm is the K-means. The basic idea behind the K-means algorithm is to divide a given set of vectors into subsets around the central points (class prototypes). In the case of the K-lines clustering algorithms vectors are partitioned into K subsets by using the central lines in the n-dimensional feature space. The proposed K-lines clustering is based on minimizations of the CPL criterion functions.},
	author = {Leon Bobrowski},
	issn = {1474-6670},
	journal = {IFAC Proceedings Volumes},
	keywords = {data mining, clustering, clustering, criterion function},
	number = {2},
	pages = {108-111},
	title = {K-Lines Clustering with Convex and Piecewise Linear (CPL) Functions},
	volume = {45},
	year = {2012}}

@inproceedings{DBLP:conf/soda/LangbergS10,
	author = {Michael Langberg and Leonard J. Schulman},
	bibsource = {dblp computer science bibliography, https://dblp.org},
	booktitle = {Proceedings of the Twenty-First Annual {ACM-SIAM} Symposium on Discrete Algorithms, {SODA}},
	pages = {598--607},
	publisher = {{SIAM}},
	title = {Universal epsilon-approximators for Integrals},
	year = {2010}}

@article{DBLP:journals/jco/CheungD11,
	author = {Yam Ki Cheung and Ovidiu Daescu},
	bibsource = {dblp computer science bibliography, https://dblp.org},
	journal = {J. Comb. Optim.},
	number = {1},
	pages = {52--70},
	timestamp = {Thu, 18 May 2017 09:50:35 +0200},
	title = {Line facility location in weighted regions},
	volume = {22},
	year = {2011}}
\end{document}